\definecolor{myurlcolor}{rgb}{0,0,0.4}
\definecolor{mycitecolor}{rgb}{0,0.5,0}
\definecolor{myrefcolor}{rgb}{0.5,0,0}
\newcommand*{\addFileDependency}[1]{
  \typeout{(#1)}
  \@addtofilelist{#1}
  \IfFileExists{#1}{}{\typeout{No file #1.}}
}
\newcommand*{\myexternaldocument}[1]{
    \externaldocument{#1}
    \addFileDependency{#1.tex}
    \addFileDependency{#1.aux}
}
\newcommand{\beq}[0]{\begin{equation}}
\newcommand{\eeq}[0]{\end{equation}}
\newcommand{\one}{\leavevmode\hbox{\small1\normalsize\kern-.33em1}}
\def\be{\begin{equation}}
\def\ee{\end{equation}}
\def\ben{\begin{eqnarray}}
\def\een{\end{eqnarray}}
\def\eea{\end{array}}
\def\bea{\begin{array}}
\newcommand{\Tr}[1]{\mathrm{Tr}#1}
\newcommand{\bei}{\begin{itemize}}
\newcommand{\eei}{\end{itemize}}
\newcommand{\ket}[1]{|#1\rangle}
\newcommand{\bra}[1]{\langle#1|}
\newcommand{\proj}[1]{\ket{#1}\!\!\bra{#1}}
\newcommand{\braket}[2]{\langle{#1}|{#2}\rangle}
\newcommand{\I}{\mathbbm{1}}
\newcommand{\p}{\Vec{p}}
\renewcommand{\emph}[1]{\textbf{#1}}
\newcommand{\eqnref}[1]{(\ref{#1})}
\newcommand{\figref}[1]{Fig.~\ref{#1}}
\newtheorem*{rep@theorem}{\rep@title}
\newcommand{\newreptheorem}[2]{%
\newenvironment{rep#1}[1]{%
 \def\rep@title{#2 \ref{##1}}%
 \begin{rep@theorem}}%
 {\end{rep@theorem}}}
\theoremstyle{plain}
\newtheorem{thm}{Theorem}
\newtheorem*{thm*}{Theorem}
\newtheorem{fakt}{Fact}
\newtheorem{assu}[thm]{Assumption}
\theoremstyle{definition}
\theoremstyle{remark}
\begin{document}

\title{Topologically noise robust network steering without inputs}
\author{Dhruv Baheti}
\email{dhruvbaheti27@gmail.com}
\affiliation{Laboratoire d’Information Quantique, Université libre de Bruxelles (ULB), Av. F. D. Roosevelt 50, 1050 Bruxelles, Belgium}
\affiliation{Department of Physics, Indian Institute of Technology Kharagpur,
Kharagpur 721302, India}
\author{Shubhayan Sarkar}
\email{shubhayan.sarkar@ug.edu.pl}
\affiliation{Laboratoire d’Information Quantique, Université libre de Bruxelles (ULB), Av. F. D. Roosevelt 50, 1050 Bruxelles, Belgium}
\affiliation{Institute of Informatics, Faculty of Mathematics, Physics and Informatics,
University of Gdansk, Wita Stwosza 57, 80-308 Gdansk, Poland}

\begin{abstract}	
Quantum networks with independent sources allow observing quantum nonlocality or steering with just a single measurement per node of the network, or without any inputs. Inspired by the recently introduced notion of swap-steering, we consider here the triangle network scenario without inputs, where one of the parties is trusted to perform a well-calibrated measurement. In this scenario, we first propose a linear witness to detect triangle network swap-steering. Then, by using the correlations that achieve the maximum value of this inequality, and assuming that all the sources are the same, we can self-test the state generated by the sources and the measurements of the untrusted party. We then extend this framework to ring networks with an arbitrary number of nodes with one of them being trusted. Interestingly, this is the first example of a topologically robust, that is, one can observe steerability without assuming the network structure of the network, as well as noise-robust quantum advantage in a network. Additionally, by allowing the trusted party to perform tomography of their subsystems, we demonstrate that every bipartite entangled state will result in swap-steerable correlations in the ring network. For this purpose, we construct linear witnesses to detect ring network swap-steering corresponding to every bipartite entangled state. 
\end{abstract}


\maketitle

\section{Introduction} Quantum non-locality is one of the most remarkable features of quantum mechanics that defies our classical intuitions about the world. It refers to the ability of separated quantum systems to set-up correlations that are incompatible with assumptions of local causality. This quantum property was first conceptualized in the celebrated work of Einstein, Podolsky and Rosen \cite{EPR}, following which Bell in 1964 \cite{Bell, Bell66} formulated a theoretical test to observe non-classical correlations.  It was then experimentally verified \cite{Bellexp1, Bellexp2, Bellexp3, Bellexp4} and is now recognized as a fundamental aspect of quantum mechanics. Quantum non-locality has now found applications in the study of a wide variety of topics such as cryptography, quantum teleportation, quantum communication, and quantum computing (refer to \cite{NonlocalityReview} for a review). Quantum non-locality can be re-formulated, with an additional assumption, into quantum steering, where an observer is able to remotely influence a distant quantum system. Quantum steering was initially introduced by Schrodinger \cite{Schrod} and was rigorously formulated in \cite{Wiseman}. The additional assumption is that one of the parties is trusted to perform fixed quantum measurements.

To observe quantum nonlocality or steering in the standard scenario, each of the involved parties in the experiment must have at least two incompatible measurements as inputs. Interestingly, multiple networked quantum sources allow non-classicality without the need for incompatible measurements. The framework to witness quantum nonlocality in networks was introduced in \cite{pironio1,Pironio22, pironio2, Fritz}. However, it was first noted in \cite{Pironio22} and then in \cite{Fritz} that considering independent sources between non-communicating parties allows one to observe quantum nonlocality with a single fixed measurement for every party. 
The first example of a genuinely network non-local correlation, that is, the correlations that are intrinsic to the network structure and thus can not be simulated using smaller networks, was presented in \cite{renou1} in the case of three nodes with three independent sources, also referred to as the triangle network.

The notion of quantum steering in networks was introduced in \cite{netstee} where some nodes in the network were assumed to be trusted in the sense that local subsystems at these nodes could be precisely known, or equivalently, the parties at the trusted nodes could perform a tomographically complete set of measurements on their incoming subsystems. The minimal scenario where one could observe quantum steering in networks without inputs was introduced in \cite{Sarkar2024networkquantum} and is known as swap-steering. The scenario comprised of two sources and two parties without inputs, where one of the parties is trusted to perform a single bell measurement. Unlike the case in \cite{renou1}, to demonstrate swap-steering, a linear witness was constructed in \cite{Sarkar2024networkquantum}. This makes swap-steering much easier to observe in experiments. Moreover, it was also recently demonstrated that every entangled state \cite{sarkar2024_every} and every entangled measurement \cite{entmeauniv} can be used to observe swap-steering.

In this work, we generalize the notion of swap-steering to ring networks with arbitrary number of nodes, primarily focusing on the triangle network where only a single node of the network is trusted. In this scenario, assuming the trusted node to perform the Bell measurement, we construct linear witnesses to observe swap-steering in any ring network. For the triangle network, with an additional assumption of identical sources, we introduce a self-testing scheme for both the quantum states and measurements. Moreover, we show that swap-steering in the $n-$ring network is topologically robust for any network structure among the untrusted parties. Thus, one can observe a quantum advantage even when the untrusted parties might communicate with each other, which is impossible to observe in standard multipartite steering scenarios. 

A similar result was obtained in \cite{boreiri2024topologicallyrobustquantumnetwork} where none of the parties is trusted. However, it is based on observing particular probabilities in the network and is thus not ideal for experimental implementation as it is highly susceptible to noise \cite{boreiri2024noiserobustproofsquantumnetwork,Kriv_chy_2020}. Moreover, these result requires the assumption of independence and identical distribution (iid) \cite{weilenmann}. Our method is based on deriving linear witness for the arbitrary network to show nonclassicality in the network with a considerable gap between the quantum and the swap-unsteerable values making it ideal for observing in experiments and also does not require the iid assumption. Consequently, we provide the first example of a topologically as well as noise-robust quantum advantage in a network. Finally, by enabling the trusted party to perform tomography on their subsystems, we show that every bipartite entangled state can generate swap-steerable correlations in the ring network. To achieve this, we construct linear witnesses that can detect ring network swap-steering for all bipartite entangled states.

\section{Triangle network swap-steering} 
Let us begin by describing the triangle network steering scenario. Each party (or node) in the network receives a subsystem from each source connected to it on which the party performs a single four-outcome measurement. The outcomes are denoted as $a_n$ for the $n^{th}$ node such that $a_n=0,1,2,3$ and $n=1,2,3$. One of the measurement parties is trusted here implying that the measurement performed on it's subsystem is known (see Fig. \ref{fig1}). Here we trust that $A_1$ performs the measurement corresponding to the Bell basis given by $ N_{A_1}=\{\proj{\phi_{+}},\proj{\phi_{-}},\proj{\psi_{+}},\proj{\psi_{-}}\} $ where
\begin{eqnarray}\label{Amea1}
    \ket{\phi_{\pm}}&=&\frac{1}{\sqrt{2}}\left(\ket{00}\pm\ket{11}\right)\nonumber\\
    \ket{\psi_{\pm}}&=&\frac{1}{\sqrt{2}}\left(\ket{01}\pm\ket{10}\right).
\end{eqnarray}
The experiment is then repeated enough times to construct the joint probability distribution (correlations) $\vec{p}=\{p(a_{1},a_{2},a_{3})\}=\{p(\{a\}_3)\}$ where $\{a\}_n$ denotes strings of $n$ outputs $(a_{1},a_{2},\ldots,a_{n})$ where $a_i=0,1,2,3$. These probabilities can be computed in quantum theory for the triangle network as
\begin{eqnarray}
p\left(\{a\}_3\right)=\Tr{\left[ \left( N_{A_1}^{a_{1}}M_{A_2}^{a_{2}}M_{A_3}^{a_{3}} \right) \rho_{A_1^2A_2^1}\rho_{A_2^2A_3^1}\rho_{A_3^2A_1^1} \right]}. 
\end{eqnarray}
Here $A_i^j$ denotes the $j^{th}$ subsystem received by the $i^{th}$ node and $N_{A_1}^{a_{1}}/M^{a_i}_{A_{i}}$ denotes trusted/untrusted measurement elements of $A_i$ corresponding to the outcome $a_i$, are positive and $\sum_{a_i}M_{A_i}^{a_{i}}=\sum_{a_1}N_{A_1}^{a_{1}}=\I$. 
It is important to recall that the measurement parties can not communicate during the experiment. 

{\it{SOHS models---}}
The notion of unsteerability in the simplest quantum network was introduced in \cite{Sarkar2024networkquantum} and is based on two assumptions, namely outcome-independence and separable quantum sources. Any hidden state model satisfying the above two assumptions is termed separable outcome-independent hidden state (SOHS) model. 
Let us here modify the assumptions of \cite{Sarkar2024networkquantum} for the triangle network with one trusted node. For this purpose, let us begin by considering the most general model to reproduce any correlation $p\left(\{a\}_3\right)$ in the triangle network [see Fig. \ref{fig1}] given by
\begin{eqnarray}
    p\left(\{a\}_3\right)=\sum_{\lambda_1,\lambda_2,\lambda_3}p(a_1,a_2,a_3|\lambda_1,\lambda_2,\lambda_3)p(\lambda_1,\lambda_2,\lambda_3)\ \ 
\end{eqnarray}
which can be expressed using the Bayes' rule as
\begin{eqnarray}
\sum_{\lambda_1,\lambda_2,\lambda_3}p(a_1|\lambda_1,\lambda_3)p(a_2,a_3|\lambda_1,\lambda_2,\lambda_3,a_1)p(\lambda_1,\lambda_2,\lambda_3).\ \ 
\end{eqnarray}
Here, the hidden variable $\lambda$ is in general any physical quantity that can generate correlations, for instance, they can be some classical variables, quantum states or even states in generalised probabilistic theories. Notice that in the above formula, we directly take $p(a_1|\lambda_1,\lambda_2,\lambda_3)=p(a_1|\lambda_1,\lambda_3)$ as $A_1$ does not recieve the hidden variable $\lambda_2$ from the source $S_2$. As $A_1$ is trusted to perform a particular quantum measurement on some quantum state, we impose in the above model that the source sends some quantum states to $A_1$ based on the variables $\lambda_1,\lambda_3$ and thus $ p\left(\{a\}_3\right)$ from the above formula can be expressed as 
\begin{equation}\label{eq55}
\sum_{\lambda_1,\lambda_2,\lambda_3}\Tr(N^{a_1}_{A_1}\rho_{\lambda_1,\lambda_3})p(a_2,a_3|\lambda_1,\lambda_2,\lambda_3,a_1)p(\lambda_1,\lambda_2,\lambda_3).\quad
\end{equation}

\begin{figure}[t]
\includegraphics[width=\linewidth]{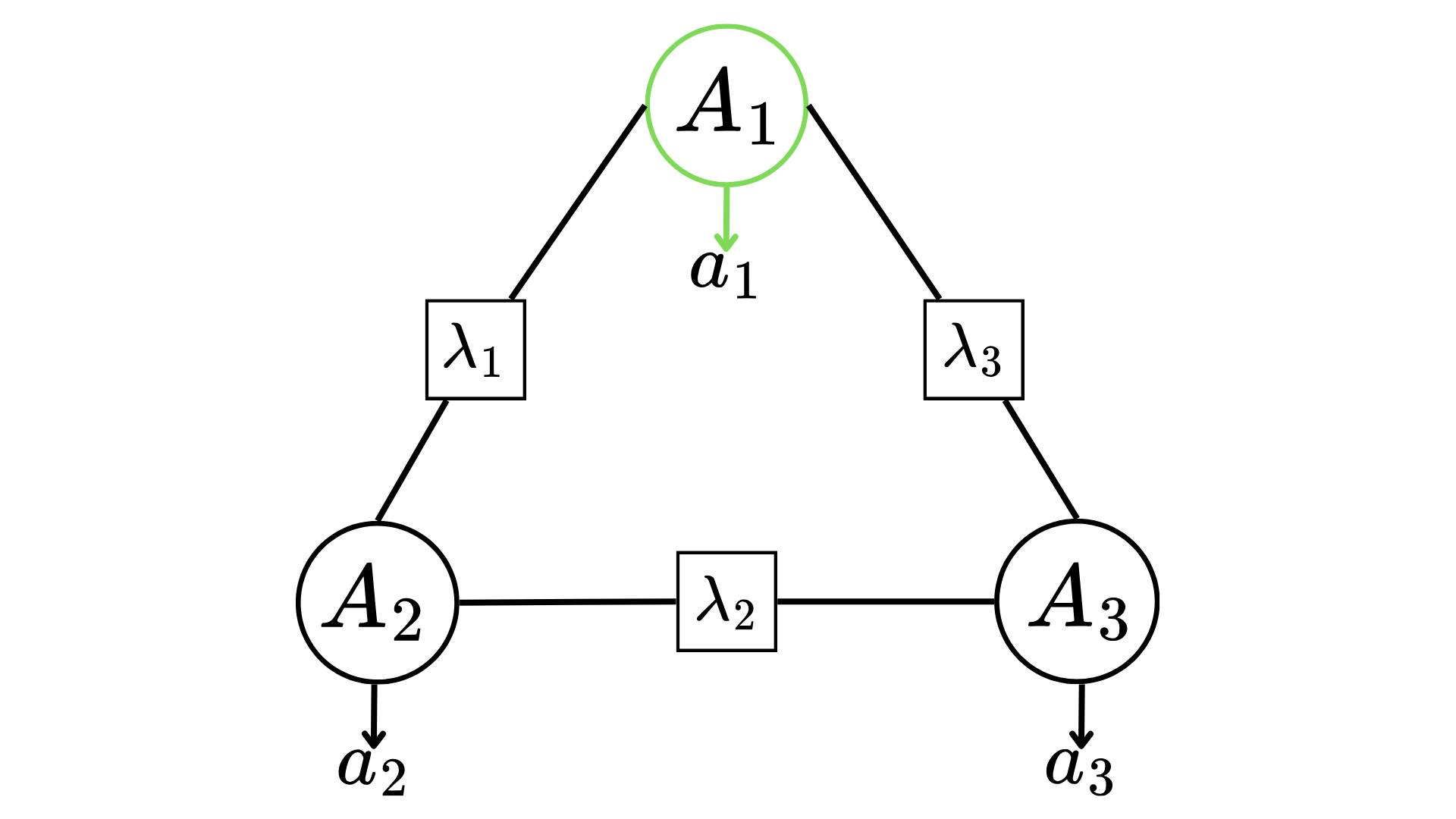}
    \caption{Triangle swap-steering scenario. Each node receives two subsystems from the two sources connected to them, on which a single four-outcome measurement is performed. Node $A_1$ is trusted here, meaning a Bell-basis measurement is performed at this node. No communication is allowed between the nodes during the experiment. By repeatedly performing the experiment, one obtains the joint probability distribution $\{p(a_1,a_2,a_3)\}$.}
    \label{fig1}
\end{figure} 

Let us now further restrict the above model to the case when the correlations can be explained locally, that is, via some local states on the trusted end and some local variables on the untrusted side. Similar to \cite{Sarkar2024networkquantum}, the notion of locality in the triangle network [see Fig. \ref{fig1}] is defined via the following two assumptions:

\begin{assu}[Outcome-independence]\label{ass1} The outcomes of parties are independent of each other if one has access to the hidden variables $\lambda_i$ from all the sources they are connected to.
\end{assu}
In the scenario considered in this work, we formulate the above statement for the set of all outcomes and hidden variables $\{a\}_3,\{\lambda\}$ respectively, as
\begin{eqnarray}\label{oi-eq}
p(a_{i}|\{\lambda\},\{a\}_{2/i})=p(a_{i}|\{\lambda_i\})\quad \forall i
\end{eqnarray}
where $\{\lambda_{i}\}$ is the set of all sources connected to node $i$ and $\{a\}_{2/i}$ is the set of outcomes of all the other nodes. 
\begin{assu}[Separate sources]\label{ass2} Sources $S_i$ $(i=1,2,3)$ distribute hidden variables to the nodes they are connected to such that the joint distribution of the hidden variables is a convex combination of product distributions. 
\end{assu}

The above statement is formulated mathematically as 
\begin{eqnarray}\label{si-eq}
    p(\{\lambda\}) = \sum_{k}\prod_{j=1}^{3} p^{(k)} p(\lambda_j^{(k)}).
\end{eqnarray}
Note that $p^{(k)}$ is a distribution that specifies the probability of preparing different product distributions. Thus, here we allow for the sources to distribute convex sums of independent hidden variables. Since at the trusted end, the correlations are generated via some quantum states, for each variable $\lambda_i^{(k)}$, $A_1$ receives some quantum state $\rho_{\lambda_i}^{(k)}$. As independently prepared states in quantum theory are described via the tensor product rule, we have that the joint state at $A_1$ is $\rho_{\lambda_1}^{(k)}\otimes \rho_{\lambda_3}^{(k)}$ that are sent with a distribution $\{p^{(k)}\}$.

The assumption of separable sources 
stems from the assumption of independent sources in quantum networks \cite{Fritz, pironio1,
Marco}. However, it is a weaker assumption than that of independent sources and allows the sources to be classically correlated. We will later on further weaken this assumption to the case when only the sources connected to the trusted party are separable with no other restriction imposed on other sources. Particularly, for the trusted node $a_1$, we have that 
\begin{eqnarray}\label{eq-5}
p(a_1|\rho_{\lambda_1^{(k)}}\otimes\rho_{\lambda_3^{(k)}})= \Tr{[N^{a_1}_{A_1}\rho_{\lambda_1^{(k)}}\otimes\rho_{\lambda_3^{(k)}}]}.
\end{eqnarray}

Now, given that the sources $S_i$ for $i=1,2,3$ generate some (for now hidden) states $\lambda_i^{(k)}$, we can always express the probability $p(\{a\}_3)$ for any $\{a\}_3$ as
\begin{eqnarray}
    p(\{a\}_3)=\sum_{k}\sum_{\{\lambda^{(k)}\}} p^{(k)} p(\{\lambda^{(k)}\}) p(\{a\}_3|\{\lambda^{(k)}\}).
\end{eqnarray}
Using Bayes rule and outcome-independence \eqref{oi-eq}, we have
\begin{eqnarray}
      p(\{a\}_3)=\sum_{k}\sum_{\{\lambda\}} p^{(k)} \prod_{i=1}^{3} p(\{\lambda^{(k)}\}) p(a_{i}|\{\lambda_{i}^{(k)}\}).
\end{eqnarray}
Assuming separable sources [see Eqs. \eqref{si-eq} and \eqref{eq-5}] we obtain that $ p(\{a\}_3)$ for any SOHS model is given by
\begin{equation}\label{SOHS}
   \sum_{k}\sum_{\{\lambda\}}\prod_{j=1}^{3} p^{(k)} p(\lambda_{j}^{k})\Tr{[N^{a_1}_{A_1}\rho_{\lambda_1^{k}}\otimes\rho_{\lambda_3^{k}}]}\prod_{i=2}^{3} p(a_{i}|\{\lambda_{i}^{(k)}\}).
\end{equation}

To witness swap-steering in the triangle network, a functional $W$ can be constructed which depends on $\vec{p}$ as
\begin{eqnarray}
W(\vec{p})=\sum_{\{a\}_3}c_{\{a\}_3}p(\{a\}_3) \leq \beta_{SOHS}
\end{eqnarray}
where $c_{a_1,a_2,a_3}$ are real coefficients and $\beta_{SOHS}$ denotes the maximum value attainable using assemblages admitting the SOHS model \eqref{SOHS}. 
For the purpose of this article, we consider only linear functionals of $\vec{p}$ and thus, we can drop the index $k$ in Eq. \eqref{SOHS} and optimize over individual preparations as no convex combination can increase the value of linear functionals. Consequently, from now on for any SOHS model,
\begin{equation}\label{eq13}
   p(\{a\}_3)= \sum_{\{\lambda\}}\prod_{j=1}^{3} p(\lambda_{j})\Tr{[N^{a_1}\rho_{\lambda_1}\otimes\rho_{\lambda_3}]}\prod_{i=2}^{3} p(a_{i}|\{\lambda_{i}\}).
\end{equation}

If correlations in the triangle network do not admit a SOHS model, then the correlations are swap-steerable. Now, from the expression \eqref{eq13}, it is clear that if the local state $\rho_{\lambda_1,\lambda_3}$ is not separable, then such correlations do not admit a SOHS model. This is possible if the joint state prepared by the sources $S_1,S_3$ already sends some entangled state to the trusted party and thus the assumption of separable sources is violated. However, based on the current understanding of the physical world, it should always be possible to prepare two sources that do not produce entangled joint systems. Considering that the sources are separable, this implies that outcome-independence is violated which is detected by the trusted party as his local state is entangled due to the measurement of other parties. Thus, other parties could steer or change the trusted party's local state from separable to entangled using local operations on spatially separated labs. 

Comparing it to the notion of network quantum steering \cite{netstee}, we weaken the assumption of independent sources to separable sources. Moreover, Ref. \cite{netstee} considers that the trusted party can perform a tomography of the received subsystem and then reconstruct the collection of post-measured states known as assemblage. Here, we restrict that the trusted party also performs a single measurement like the other parties and does not have complete information about the assemblage.
Let us now find a functional that can detect swap-steering in the triangle network.

\subsection{Swap-Steering}
Consider the following three party swap-steering functional 
\begin{eqnarray}\label{steering-3}
W_{3} = p(0,0,0) + p(0,1,1) + p(0,2,2) + p(0,3,3) \quad \nonumber \\ + p(1,0,1)  + p(1,1,0) + p(1,2,3) + p(1,3,2) \quad \nonumber \\ + p(2,0,2) + p(2,2,0) + p(2,1,3)+ p(2,3,1) \quad \nonumber \\ +  p(3,0,3) + p(3,3,0) + p(3,1,2) + p(3,2,1).  \quad
\end{eqnarray}

The above functional can be represented in a condensed way as $W_3=\sum_{\{a\}_3}c_{\{a\}_3}p(\{a\}_3)$ where
\begin{equation}
   c_{\{a\}_3}=\begin{cases}
       1\ \mathrm{if}\ a_1-a_2-(-1)^{a_2}a_3\ \mathrm{mod}\ 4 = 0\\
       0\ \mathrm{otherwise}.
   \end{cases}
\end{equation}

Recall here that Alice is trusted and performs the Bell-basis measurement with elements given in \eqref{Amea1}. Let us now find the maximum value that can be achieved using correlations that admit a SOHS model \eqref{SOHS}.

\begin{fakt}\label{fact1}
    The maximum value $\beta_{SOHS}$ of the three party swap-steering functional $W_3$ \eqref{steering-3} that can be obtained using a SOHS model is $\frac{1}{2}$.
\end{fakt}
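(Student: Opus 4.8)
The plan is to bound $W_3$ over all SOHS models of the form \eqref{eq13}, using the fact that linear functionals are maximized at extremal (non-convex-mixture) preparations. Since each source $S_1, S_3$ sends a quantum state to the trusted node $A_1$, and the Bell measurement $N_{A_1}$ acts on $\rho_{\lambda_1}\otimes\rho_{\lambda_3}$, I would first argue that it suffices to take $\rho_{\lambda_1}$ and $\rho_{\lambda_3}$ to be pure single-qubit states, say $\ket{\alpha}$ and $\ket{\gamma}$, since any mixed state is a convex combination of pure ones and $W_3$ is linear in each $\rho_{\lambda_i}$ separately (the probability $p(\{a\}_3)$ is affine in $\rho_{\lambda_1}$ once everything else is fixed, so extremize there too). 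So the trusted term becomes $\Tr[N_{A_1}^{a_1}\,(\proj{\alpha}\otimes\proj{\gamma})] = |\braket{\chi_{a_1}}{\alpha\otimes\gamma}|^2$ where $\ket{\chi_{a_1}}$ runs over the four Bell states.

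Next I would write out $W_3$ explicitly using the coefficient rule $c_{\{a\}_3}=1$ iff $a_1 - a_2 - (-1)^{a_2}a_3 \equiv 0 \pmod 4$. For each fixed pair of untrusted hidden variables $\lambda_2,\lambda_3$, the untrusted parties produce response distributions $p(a_2|\lambda_2,\text{[}S_2,S_3\text{ vars]})$ and $p(a_3|\lambda_3,\text{[}S_3,S_1\text{ vars]})$; note $A_2$ sees $\lambda_2^{\text{(from }S_2)}$ plus the $S_2$–$S_3$ edge variable, and $A_3$ sees the $S_3$–$A_3$ variable plus $\lambda_1$-side — actually in this triangle each untrusted node sees two hidden variables, one of which ($\lambda_1$ or $\lambda_3$) is correlated with the trusted side only through the shared classical label. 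The key structural point is that, by outcome-independence \eqref{oi-eq}, $a_2$ and $a_3$ are conditionally independent given the hidden variables, so for fixed $\lambda$'s the contribution to $W_3$ is $\sum_{a_1}|\braket{\chi_{a_1}}{\alpha\otimes\gamma}|^2 \cdot \big(\sum_{a_2,a_3: c=1} p(a_2|\cdots)p(a_3|\cdots)\big)$. Because the map $a_1 \mapsto \{(a_2,a_3): a_1-a_2-(-1)^{a_2}a_3\equiv 0\}$ partitions the $16$ pairs $(a_2,a_3)$ into four groups of four indexed by $a_1$, the inner double sum is $\sum_{a_2,a_3} q_{a_1}(a_2,a_3)\, p(a_2)p(a_3)$ for a $0/1$ matrix $q_{a_1}$ that is a permutation-type pattern; I would bound this by observing that $\sum_{a_1} (\text{inner sum for }a_1) = \sum_{a_2,a_3} p(a_2)p(a_3) = 1$, and that $|\braket{\chi_{a_1}}{\alpha\otimes\gamma}|^2$ summed over $a_1$ equals $1$. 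So we are bounding a quantity of the form $\sum_{a_1} s_{a_1} t_{a_1}$ with $\sum s_{a_1} = \sum t_{a_1} = 1$, $s,t\geq 0$, which is at most $\max_{a_1} t_{a_1}$ — and the real content is showing $\max_{a_1}\big(\sum_{a_2,a_3} q_{a_1}(a_2,a_3) p(a_2)p(a_3)\big) \le \tfrac12$ whenever $p(\cdot)$ is a genuine product over the independent hidden variables, equivalently that no single Bell-state overlap $|\braket{\chi_{a_1}}{\alpha\otimes\gamma}|^2$ can exceed $\tfrac12$ for product $\ket{\alpha\otimes\gamma}$.

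That last fact is the crux: for any product qubit state $\ket{\alpha}\otimes\ket{\gamma}$ and any Bell state $\ket{\chi}$, $|\braket{\chi}{\alpha\otimes\gamma}|^2 \le \tfrac12$, with equality achievable (e.g. $\ket{00}$ against $\ket{\phi_+}$). Combined with the normalization $\sum_{a_1}|\braket{\chi_{a_1}}{\alpha\otimes\gamma}|^2 = 1$, this pins the trusted side's maximal "alignment" with any one outcome to $\tfrac12$, and then the product structure of the untrusted responses prevents the remaining factors from doing better than putting all their weight on the matching pattern. Assembling: $W_3 \le \sum_{a_1}|\braket{\chi_{a_1}}{\alpha\otimes\gamma}|^2 \cdot \big(\text{inner}_{a_1}\big) \le \tfrac12 \sum_{a_1}\text{inner}_{a_1} = \tfrac12$. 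I expect the main obstacle to be the bookkeeping in the previous paragraph — correctly tracking which hidden variable each untrusted party shares with the trusted node and verifying that the $q_{a_1}$ patterns together with the product constraint on $p(\lambda)$ genuinely cap the inner sum at $\tfrac12$ (rather than some value between $\tfrac12$ and $1$ that would only be excluded after also using the Bell-overlap bound); once that combinatorial/convexity step is set up cleanly, the Bell-overlap inequality $|\braket{\chi}{\alpha\otimes\gamma}|^2\le\tfrac12$ and the normalization identities finish it. Attainability of $\tfrac12$ should then be exhibited by an explicit separable assignment (all sources deterministic, $A_1$ receiving $\ket{00}$ from both, untrusted parties outputting fixed values) to confirm the bound is tight.
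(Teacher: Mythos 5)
Your final chain of inequalities is exactly the paper's proof: for every pair $(a_2,a_3)$ there is a unique $a_1$ with $c_{a_1,a_2,a_3}=1$ (so the sixteen terms of $W_3$ exhaust the untrusted outcome pairs, four per value of $a_1$), the trusted factor obeys $\Tr{[N^{a_1}(\rho_{\lambda_1}\otimes\rho_{\lambda_3})]}\le\tfrac12$ for any product (hence, by linearity, any separable) input, and the remaining sum over untrusted outcomes and hidden variables equals $1$ by normalization; attainability via $\ket{0}\otimes\ket{0}$ at the trusted node with deterministic untrusted outputs is also fine. One correction to your middle paragraph, so you do not chase a false lemma: the claim that ``the real content is showing $\max_{a_1}\sum_{a_2,a_3}q_{a_1}(a_2,a_3)\,p(a_2)p(a_3)\le\tfrac12$'' is wrong --- with deterministic untrusted responses that maximum equals $1$ --- and no such bound is needed, since the factor $\tfrac12$ comes solely from the Bell-overlap bound on the trusted side, exactly as in your assembling step. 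Relatedly, the conditional independence of $a_2$ and $a_3$ and the product form of $p(\lambda_1)p(\lambda_2)p(\lambda_3)$ beyond separability at the trusted node play no role: the paper's appendix proof bounds the even weaker topologically robust model in which $p(a_2,a_3\mid\lambda_1,\lambda_3)$ is completely arbitrary, and your argument goes through verbatim in that setting once the trusted factor is pulled out of the sum.
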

The proof of the above is stated in Fact 1 of the Appendix.
%
%
%
Now, consider the following quantum experiment. Each source prepares the state $\ket{\psi_i}=\ket{\phi_+}_{A_i^2A_{i+1}^1}$ and each node measures in the same basis as the trusted node, that is, $M_{A_{i}}=\{\proj{\phi_{+}},\proj{\phi_{-}},\proj{\psi_{+}},\proj{\psi_{-}}\}_{A_{i}^1A_i^2}$ where the corresponding states are given in \eqref{Amea1}. Using these states and measurements, one can simply evaluate the steering functional $W_3$ \eqref{steering-3} to get the value $1$. Notice that this is also the algebraic bound of $W_3$. Thus, we conclude that the experiment described generates the quantum correlation that maximally violates the three-party swap-steering inequality. It is also interesting to observe here that with the given quantum states and measurements, the post-measured state at the trusted node $A_1$ is one of the four bell states \eqref{Amea1}. 
Consequently, the state with the trusted party is steered from a separable state to an entangled one due to the measurement of the other parties and is detected via the swap-steering witness.

{\subsection{Necessary conditions for swap-steering}}
Let us now find some necessary conditions to observe swap-steering in the triangle network.
We show that to violate the SOHS bound, each of the sources and measurements must be entangled. For the case of measurements, we say it is entangled if at least one of the basis elements is entangled. Conversely, we call a measurement separable if all its elements are separable.
\begin{fakt}\label{fact3}
    If any one of the states generated by the sources or the measurement of any untrusted node is separable, then the maximum value that can be obtained of the swap-steering functional $W_3$ \eqref{steering-3} is $\frac{1}{2}$.
\end{fakt}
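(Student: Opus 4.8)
The plan is to move to the trusted node's point of view. First I would rewrite the quantum value of $W_3$ in terms of the sub-normalised conditional state left on $A_1$ by the outcomes $(a_2,a_3)$ of the two untrusted nodes, then show that the stated hypothesis forces this conditional state to be separable across the two qubits of $A_1$, and finally invoke the elementary bound $\Tr{[\proj{\beta}\sigma]}\le\tfrac12$ valid for a maximally entangled two-qubit $\ket{\beta}$ and any separable $\sigma$.

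I would begin by fixing notation. For $(a_2,a_3)\in\{0,1,2,3\}^2$ set
\[
\tilde\rho^{(a_2,a_3)}_{A_1}=\mathrm{Tr}_{\overline{A_1}}\!\big[\,(M_{A_2}^{a_2}\otimes M_{A_3}^{a_3})\,\rho_{A_1^2A_2^1}\rho_{A_2^2A_3^1}\rho_{A_3^2A_1^1}\,\big],
\]
a positive operator on $A_1^1\otimes A_1^2$, where $\overline{A_1}$ denotes all subsystems except $A_1^1A_1^2$. Then $p(\{a\}_3)=\Tr{[N_{A_1}^{a_1}\tilde\rho^{(a_2,a_3)}_{A_1}]}$, $\Tr{[\tilde\rho^{(a_2,a_3)}_{A_1}]}=p(a_2,a_3)$, and $\sum_{a_2,a_3}p(a_2,a_3)=1$. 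A short combinatorial check shows that for each $(a_2,a_3)$ there is exactly one $a_1$, namely $a_1\equiv a_2+(-1)^{a_2}a_3\pmod{4}$, with $c_{\{a\}_3}=1$, and that the sixteen monomials of $W_3$ are precisely these, so that
\[
W_3=\sum_{a_2,a_3}\Tr{[\,N_{A_1}^{a_1^\star(a_2,a_3)}\,\tilde\rho^{(a_2,a_3)}_{A_1}\,]}.
\]

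The core of the argument is the claim that, under the hypothesis, $\tilde\rho^{(a_2,a_3)}_{A_1}$ is a sub-normalised \emphalt{separable} state on $A_1^1\otimes A_1^2$ for every $(a_2,a_3)$. I would prove this by inserting the relevant separable decomposition into the trace and performing the partial trace, in five cases. If $\rho_{A_1^2A_2^1}=\sum_\nu q_\nu\,\alpha^{A_1^2}_\nu\otimes\gamma^{A_2^1}_\nu$ is separable then $A_1^2$ pulls out of the partial trace, and what is left on $A_1^1$ is a partial trace of a product of positive operators against a positive operator, hence positive semidefinite, so $\tilde\rho^{(a_2,a_3)}_{A_1}$ is a convex combination of product states. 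The same computation applies when $\rho_{A_3^2A_1^1}$ is separable (now $A_1^1$ pulls out), when $\rho_{A_2^2A_3^1}$ is separable (this source sits on the $A_2$–$A_3$ arc, and splitting it severs the last link between $A_1^1$ and $A_1^2$), and when every effect of $M_{A_2}$ or of $M_{A_3}$ is a separable operator, $M^{a}=\sum_\mu p_\mu\,X_\mu\otimes Y_\mu$ with $X_\mu,Y_\mu\succeq 0$ (inserting the product at that node decouples the two arcs reaching $A_1$). In every case the only ``wires'' feeding the two qubits of $A_1$ now come from mutually independent pieces of the network.

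To conclude, each $N_{A_1}^{a_1}$ projects onto a maximally entangled vector of $\mathbb{C}^2\otimes\mathbb{C}^2$, whose Schmidt coefficients equal $1/\sqrt{2}$; hence $\Tr{[N_{A_1}^{a_1}\sigma]}\le\tfrac12\Tr{[\sigma]}$ for every separable $\sigma$ — the bound holds for product states by Cauchy–Schwarz and extends to mixtures by convexity. Combining with the display above,
\[
W_3=\sum_{a_2,a_3}\Tr{[N_{A_1}^{a_1^\star}\tilde\rho^{(a_2,a_3)}_{A_1}]}\le\tfrac12\sum_{a_2,a_3}p(a_2,a_3)=\tfrac12,
\]
and the value $\tfrac12$ is attained (e.g.\ let $S_1$ and $S_3$ deterministically send $\ket{0}$ to $A_1$ while $A_2,A_3$ always output $0$), so $\tfrac12$ is indeed the maximum. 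I expect the main obstacle to be not any single estimate but the bookkeeping in the separability step: keeping track of which source feeds which subsystem, and verifying the factorisation in the awkward sub-case where the separable source is $\rho_{A_2^2A_3^1}$, which touches neither qubit of $A_1$ but lies on the arc joining them — together with making sure that ``separable measurement'' is used in the (correct and sufficient) sense that each \emphalt{effect} is a separable operator rather than the stronger statement that the POVM is a mixture of product POVMs.
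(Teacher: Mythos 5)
Your proposal is correct and follows essentially the same route as the paper's proof: show that the hypothesis forces the (sub-normalised) post-measurement state at the trusted node to be separable across $A_1^1\otimes A_1^2$, bound its overlap with each Bell projector by $\tfrac12$, and use the fact that each pair $(a_2,a_3)$ contributes exactly one term to $W_3$ so the conditional probabilities sum to one. Your handling of general separable (convex-mixture) sources and separable effects, rather than product ones, is a minor refinement of the same argument, not a different approach.
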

 The proof of the above statement is stated in Fact 3 of the Appendix. The key insight here is that in the experiment described above for the maximal violation of the SOHS bound, the sources set up a chain of entanglement between the nodes which perform entangled measurements to swap the entanglement between the pairs of systems, resulting in an entangled state at the trusted node post measurement at all the other nodes. However, when any of the states or measurements are separable, there is a break in the link, thus, entanglement cannot be swapped across the network to the trusted node's system. Thus, steering functional is bounded by $\frac{1}{2}$.

{\subsection{Weak self-testing in the triangle network}} 
Self-testing is a theoretical framework that allows for the certification of quantum states and measurements based purely on observed data, without needing detailed knowledge about the inner workings of the devices involved. By analyzing the correlations between measurement outcomes, self-testing guarantees that a quantum system behaves in a specific way, such as using particular entangled states, like Bell states, or performing certain quantum measurements. There are numerous self-testing schemes that aim to certify quantum states and measurements [for instance see Refs. \cite{Scarani,Yang,Bamps,All,chainedBell, sarkar, Coladangelo_2017, sarkar4}]. Furthermore, self-testing of quantum states and measurements in quantum networks was recently explored in Refs. \cite{Marco, NLWEsupic, JW2, Allst1, sarkar2023_UPB, supic4,Sarkar_2024_opera,sarkar2024universal}. In particular \cite{Allst1} certifies every pure entangled state and \cite{sarkar2024universal} certifies every quantum state. The idea of self-testing of quantum states and measurements has also been extended to scenarios involving trusted parties also called one-sided device-independent certification [for instance see Refs. \cite{Supic, Alex, sarkar6, sarkar12, Bharti, sarkar11,sarkar2024_GME}]. These schemes are in general more robust to noise and thus easier to implement than the former case. Recently, in \cite{Sarkar2024networkquantum} one-sided device-independent certification has been extended to the case of quantum networks without inputs.

Consider again the triangle scenario described above in Fig. \ref{fig1}. 
Unlike the self-testing result presented in \cite{Sarkar2024networkquantum} here we assume that the sources are identical, that is, the sources generate the same state. Let us also remark here that we consider that the states generated by the sources are pure and the measurements are projective. This is a standard assumption is most Bell or steering based self-testing, however, there might be some exceptions as noted in \cite{baptista2023}. 
Consequently, the certification of the states and measurements are weaker compared to the standard self-testing protocols. 
Let us now state our self-testing result in the triangle network scenario with one trusted party. 
\setcounter{thm}{0}
\begin{thm}\label{Theo1} 
Assume that Alice, Bob and Charlie observe that the correlations $p(a,b,c)$  in the steering functional $W_3$ \eqref{steering-3}  are equal along with $W_3=1$ such that trusted Alice performs the Bell-basis. Considering identical sources, such that the states generated by them are pure and measurements of untrusted parties are projective, the following statements hold true:
\\

1.  \ \  There exist unitary transformation $U$,  such that the state generated by the sources are certified as
\begin{eqnarray}\label{lem1.2}
(\mathbbm{1}_{A_1^1}\otimes U_{A^2_2})\ket{\psi_{A_1^1A_2^2}}=(\mathbbm{1}_{A_1^2}\otimes U_{A^1_3})\ket{\psi_{A_1^2A_3^1}}=\nonumber\\(\mathbbm{1}_{A_2^1}\otimes U_{A^2_3})\ket{\psi_{A_2^1A_3^2}}=\ket{\phi^+}.\ \ 
\end{eqnarray}
\\
2.  \ \  The measurement of Bob $\{M_{A_2}^{a_2}\}$ and Charlie $\{M_{A_3}^{a_3}\}$ is certified as
\begin{eqnarray}\label{lem1.1}
(\I_{A_2^1}\otimes U_{A^2_2})\,M_{A_2}^{a_2}\,(\I_{A_2^1}\otimes U_{A^2_2}^{\dagger})&=&\proj{\phi_{a_2}},\nonumber\\ 
(U_{A^1_3}\otimes U_{A^2_3})\,M_{A_3}^{a_3}\,(U_{A^1_3}^{\dagger}\otimes U_{A^2_3}^{\dagger})&=&\proj{\phi_{a_3}}
\end{eqnarray}
for all $a_2,a_3=0,1,2,3$.
\end{thm}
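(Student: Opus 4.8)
The plan is to leverage the rigidity of the maximal value $W_3 = 1$, which is the algebraic bound, to pin down everything term by term. First I would observe that since $W_3 \le 1$ always (each of the sixteen probabilities is nonnegative and the full set of $64$ outcome probabilities sums to $1$, while the $16$ terms appearing in $W_3$ are those with $c_{\{a\}_3}=1$), the condition $W_3 = 1$ forces $p(\{a\}_3) = 0$ for every string \emph{not} satisfying $a_1 - a_2 - (-1)^{a_2} a_3 \equiv 0 \pmod 4$. Combined with the hypothesis that the $16$ surviving probabilities are all equal, each must equal $1/16$. This gives a completely rigid correlation table. The task is then to show that, given the trusted Bell-basis measurement $N_{A_1}$ at node $A_1$, the only quantum realization (up to local unitaries on the untrusted subsystems) producing this table is the one with all sources in $\ket{\phi^+}$ and Bob, Charlie measuring in the Bell basis.

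Next I would set up the structure of the realization. Write the source states as purifications $\ket{\psi}_{A_i^2 A_{i+1}^1}$ on $\C^{d}\otimes\C^{d'}$ (a priori unknown dimension), but note that $A_1$'s two incoming subsystems $A_1^1, A_1^2$ must be qubits since a Bell-basis measurement is performed there; hence the reduced states $\rho_{A_1^1}$, $\rho_{A_1^2}$ act on $\C^2$, and I would argue the relevant support of each source state is effectively $2\times 2$ (the Schmidt rank on the $A_1$ side is at most $2$; a standard argument lets us restrict Bob's and Charlie's measurements to the corresponding supports, which is why the certification is only up to isometry/unitary and why it is ``weak''). The core computation is then: express $p(\{a\}_3) = \Tr[(N_{A_1}^{a_1}\otimes M_{A_2}^{a_2}\otimes M_{A_3}^{a_3})\,\rho_{A_1^1 A_2^2}\otimes \rho_{A_2^1 A_3^2}\otimes \rho_{A_3^1 A_1^2}]$ and use the $1/16$ flatness. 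I would first fix $a_1$ and sum over the index pattern: for each value $a_1$, the marginal $p(a_1) = \sum_{a_2,a_3} p(a_1,a_2,a_3) = 4 \cdot \tfrac{1}{16} = \tfrac14$, i.e.\ each Bell outcome at $A_1$ is equiprobable, which forces the post-measurement object — the ``assemblage'' $\sigma_{a_1}$ on $A_2, A_3$ obtained by projecting $A_1$ onto $\proj{\phi_{a_1}}$ — to be a valid normalized state (times $1/4$) that Bob and Charlie's joint measurement reads out perfectly correlated (with the appropriate index shift).

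The key step — and the main obstacle — is the entanglement-swapping rigidity: I must show that perfect correlation of Bob's and Charlie's Bell-type outcomes with $A_1$'s Bell outcome, for \emph{all four} values, forces each source to be maximally entangled and each untrusted measurement to be (unitarily equivalent to) the Bell basis. Here I would invoke Fact~\ref{fact3} (every source and every untrusted measurement must be entangled to exceed $1/2$, let alone reach $1$) to rule out degenerate cases, and then argue quantitatively: writing $\rho_{A_3^1 A_1^2}$ in its Schmidt form and demanding that \emph{every} branch $\proj{\phi_{a_1}}$ at $A_1$ leaves a pure, maximally-informative state for Charlie to measure forces the Schmidt coefficients to be equal, i.e.\ $\rho_{A_3^1 A_1^2}$ is maximally entangled; the same then propagates around the ring. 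Once the source connected to $A_1$ is $\ket{\phi^+}$ (up to a unitary $U$ on the non-$A_1$ side absorbed into the definition), the projection $\proj{\phi_{a_1}}$ on $A_1^2$ steers $A_3^1$ into the pure state $U^\dagger\ket{\phi_{a_1}}^*$ (the Bell-state correlation), and the requirement that $M_{A_3}^{a_3}$ detect this perfectly in correlation with $A_2$'s outcome pins $M_{A_3}$ to the Bell basis conjugated by $U_{A_3^1}\otimes U_{A_3^2}$; symmetry gives the statement for $M_{A_2}$ and for the remaining source $\ket{\psi}_{A_2^1 A_3^2}$. I would close by checking the index bookkeeping — that the permutation encoded by $a_1 - a_2 - (-1)^{a_2} a_3 \equiv 0$ is exactly the one induced by composing the two entanglement swaps in the Bell basis — so that the certified measurements carry the stated labels $\proj{\phi_{a_2}}, \proj{\phi_{a_3}}$ rather than some relabeling.
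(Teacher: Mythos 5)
Your overall skeleton (flat table $p=1/16$ on the sixteen allowed strings and $0$ elsewhere, qubit supports on the trusted side, certify states first and then measurements up to a local unitary) matches the structure of the paper's argument, but the mathematical core of the theorem is asserted rather than proven. The step you label ``entanglement-swapping rigidity'' — that perfect fourfold correlation forces each source to have equal Schmidt coefficients and each untrusted measurement to be the Bell basis — is exactly what has to be established, and your justification (``every branch $\proj{\phi_{a_1}}$ at $A_1$ leaves a pure, maximally-informative state for Charlie'') is not a consequence you have derived from the correlations; nothing in the data directly says the steered states are pure, and ``maximally informative'' is not a constraint you can impose. The paper does this concretely: it writes each source as $(\I\otimes P)\ket{\phi^+}$ with $P=\sqrt{2}\sum_i\lambda_i\proj{e_i^*}$, uses the conditions $p(0,b,b)=1/16$, $p(0,b,c)=0$ and the identity $\ket{\phi^+}\ket{\phi^+}=\ket{\phi^+_4}$ to reduce everything to expectation values on $\ket{\phi^+_4}$, and obtains the trace condition $\Tr(P_{B_1}^2P_{C_1}^2)\Tr(P_{C_2}^{*2})=4$, which together with $\sum_i\lambda_i^2=1$ forces $\lambda_0=\lambda_1=1/\sqrt{2}$; the measurements are then pinned down by building the unitary observables $B_0=\sum_b i^b U M_b U^\dagger$, $C_0=\sum_c i^c (U\otimes U)N_c(U^\dagger\otimes U^\dagger)$, deriving $B_0=C_0$ and conjugation relations with $\I\otimes\sigma_x$, $\I\otimes\sigma_z$, $\I\otimes\sigma_z\sigma_x$ from the $a_1=1,2,3$ rows of the table, and using the eigenvector structure plus completeness of the eigenprojectors to force $P^2=\I$. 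None of this operator-level work appears in your proposal, and ``symmetry gives the statement for $M_{A_2}$'' and ``the same then propagates around the ring'' are not substitutes for it — the Bob--Charlie source is not connected to the trusted node, so certifying it is precisely the nontrivial part.

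A second, related gap: the theorem is proved in the paper under the explicit extra assumptions (stated just before the theorem) that the three sources are \emph{identical} and that states are pure and measurements projective. The identical-source assumption is used essentially: it is what lets all three operators $P_{B_1},P_{C_1},P_{C_2}$ be the same $P$, so that the single trace condition above collapses to $2\sum_i\lambda_i^4=1$ and hence to maximal entanglement. Your argument never invokes this assumption, and without it the condition $\Tr(P_{B_1}^2P_{C_1}^2)=2$ alone admits solutions with unequal Schmidt spectra (e.g.\ one operator proportional to $\I$ and the other not), so the ``propagation around the ring'' you appeal to would need a genuinely different and stronger argument than anything sketched. As written, the proposal identifies the right target but leaves the entire rigidity computation — both for the states and for the measurements — unproven.
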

The proof of the above fact is given in the Appendix. Let us now consider a generalisation of the triangle network [Fig. \ref{fig1}] to a network with arbitrary number of nodes \figref{fig2}. 

{\section{$n$-ring scenario}} As previously done for the triangle, we trust node $A_1$ to measure in the Bell basis \eqref{Amea1}. Generalising the notion of swap-steerability to this network, any correlations $p(\{a\}_n)$ on $n-$ring network (ring network with $n$ nodes) that admits a SOHS model must be of the form
\begin{equation}\label{SOHS-n}
    p(\{a\}_n)= \sum_{\{\lambda\}}\prod_{j=1}^{n} p(\lambda_{j})\Tr{[N^{a_1}\rho_{\lambda_1}\otimes\rho_{\lambda_n}]}\prod_{i=2}^{n} p(a_{i}|\lambda_{i-1},\lambda_{i}).
\end{equation}

Let us now construct a witness to observe swap-steering in the $n-$ring network. For this purpose, consider the network with $n-1$ nodes with the corresponding witness given by $W_{n-1}=\sum_{\{a\}_{n-1}}c_{\{a\}_{n-1}}p(a_1,\ldots,a_{n-1})$. Then, to obtain the witness $W_n$ corresponding to the network with $n$ nodes, in the witness $W_{n-1}$ the outcome of the $n-1$ party $a_{n-1}$ is "reverse coarse-grained" or refined as
\begin{eqnarray}\label{rcg}
  0 &\rightarrow (0,0) , \quad (1,1) ,\quad  (2,2) , \quad (3,3)  \nonumber\\
   1 &\rightarrow (0,1) , \quad (1,0) ,\quad  (2,3) , \quad (3,2)  \nonumber \\
   2 &\rightarrow (0,2) , \quad (2,0) ,\quad  (1,3) , \quad (3,1) \nonumber \\
   3&\rightarrow  (0,3) , \quad (3,0) ,\quad  (1,2) , \quad (2,1) .
\end{eqnarray}

%
%

Notice that if we consider the witness $W_2$ in \cite{Sarkar2024networkquantum} in the network with two nodes $n=2$, the witness $W_3$ \eqref{steering-3} can be obtained from it using the above "reverse" coarse-graining \eqref{rcg}. Following the above procedure, the witness in the network with $n$ nodes is given by $ W_n=\sum_{\{a\}_n}c_{\{a\}_n} p(\{a\}_n)$ where
\begin{equation}\label{witness-na}
   c_{\{a\}_n}=\begin{cases}
       1\ \mathrm{if}\ a_1-a_2-\sum_{i=3}^n(-1)^{\sum_{k=2}^{i-1}a_k}a_i\ \mathrm{mod}\ 4 = 0\\
       0\ \mathrm{otherwise}.
   \end{cases}
\end{equation}

Let us now find the maximum value of $W_n$ attainable using the SOHS model for the $n-$ring network.

\begin{figure*}[t!]%
    \centering
    {{\includegraphics[width=8.5cm]{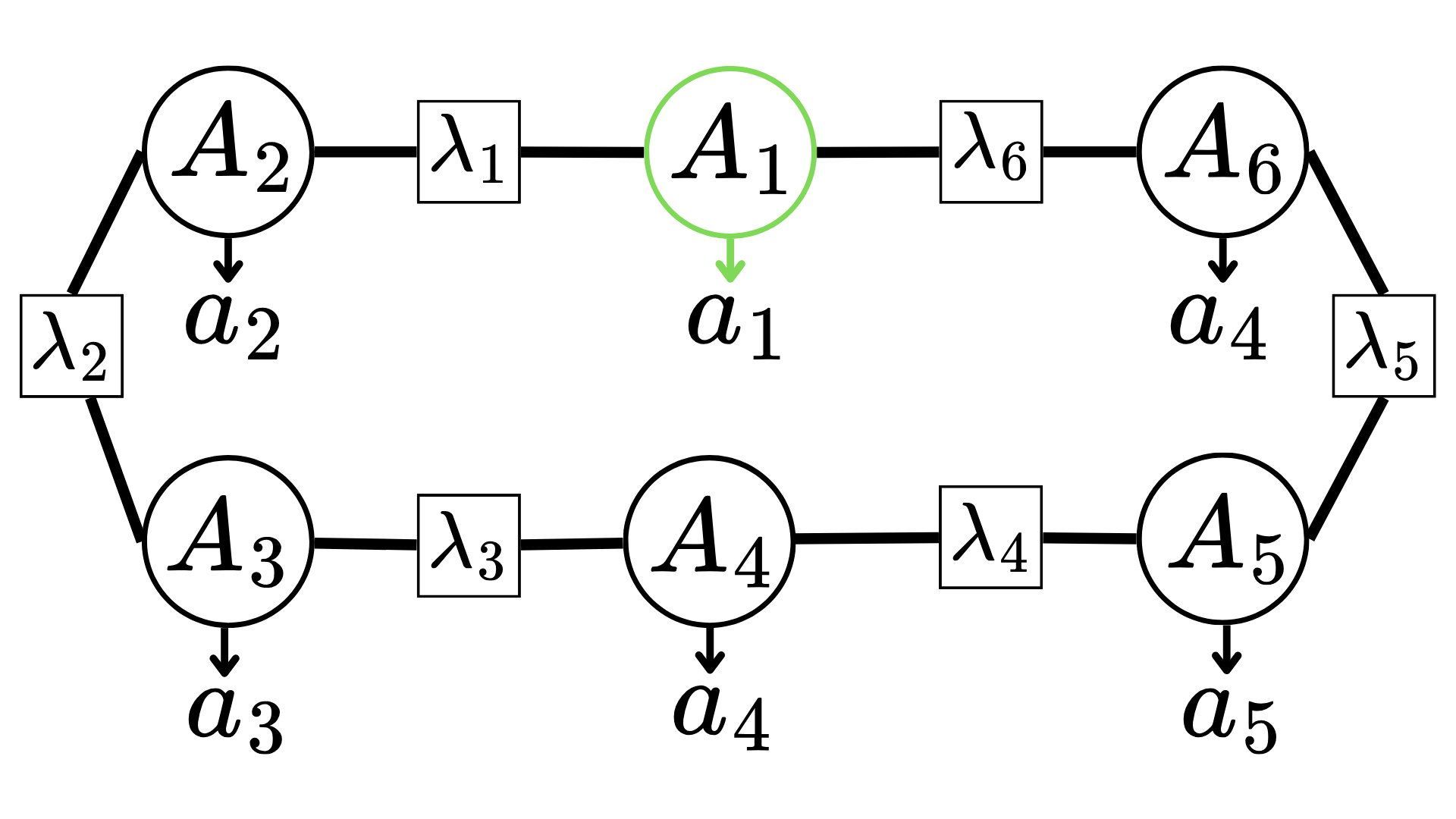} }}%
{{\includegraphics[width=9cm]{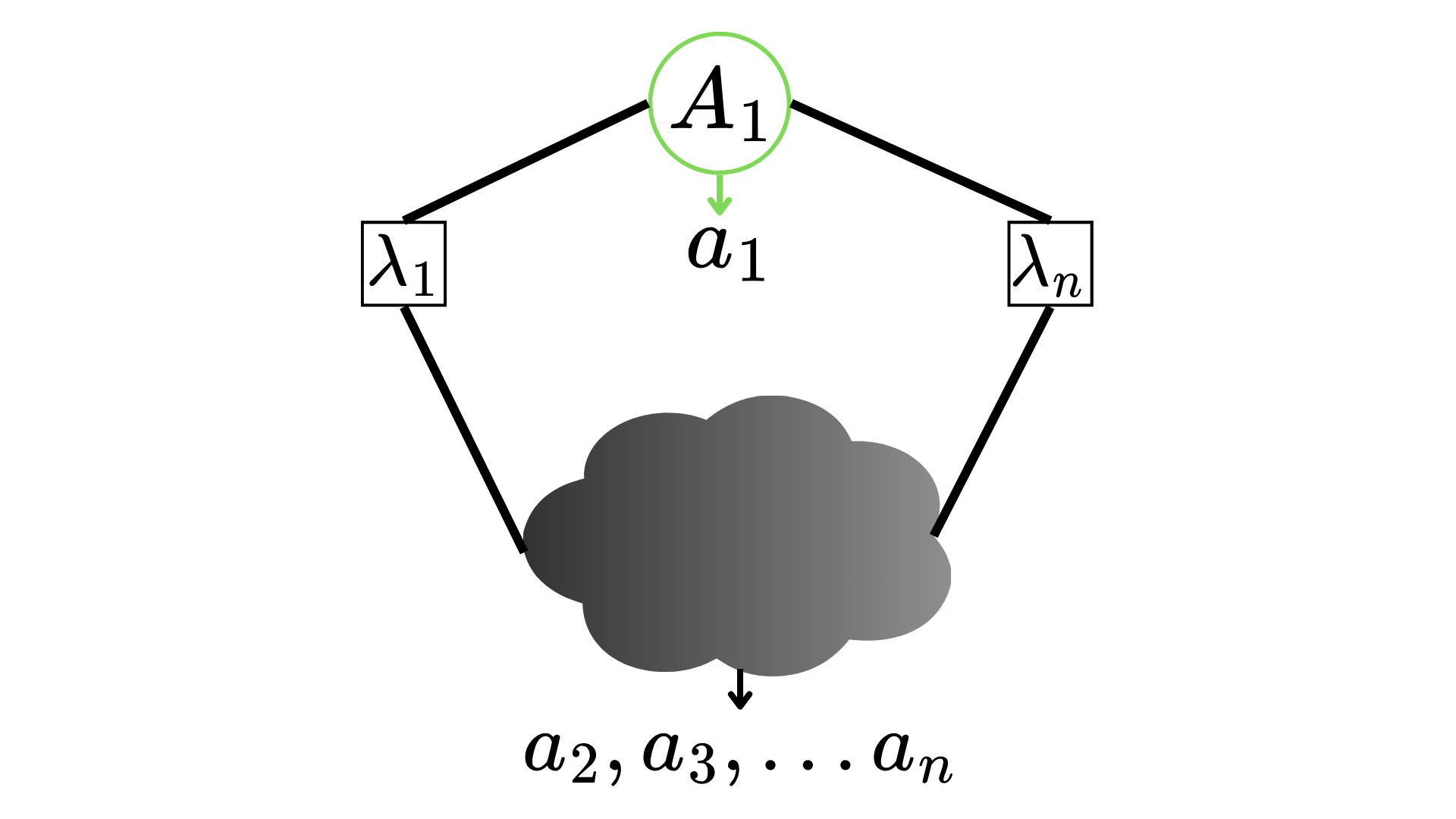} }}%
     \caption{(left) Hexagonal swap-steering scenario. (right) $n-$ring topologically robust swap-steering scenario. In both scenarios, 
     the node $A_1$ is trusted to perform measurement in the Bell-basis and all other parties perform a single four outcome measurement. In the left scenario, $A_i$ receive subsystems from sources $\lambda_{i-1},\lambda_i$. However, in the right one all the parties $A_2, \ldots,A_n$ can do any operation among themselves to produce the outputs $(a_2,\ldots,a_n)$. After the experiment, they construct the joint distribution $p(\{a\})$.} 
    \label{fig2}
\end{figure*}

\begin{fakt}\label{fact5}
 The maximum value $\beta_{SOHS}$ of the three party swap-steering functional $W_n$ that can be obtained using a SOHS model is $\frac{1}{2}$.
\end{fakt}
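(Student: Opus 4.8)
The plan is to mimic the proof of Fact~\ref{fact1} for the triangle, showing that the SOHS structure \eqref{SOHS-n} forces the value of $W_n$ to be at most $\tfrac12$, and that this bound is in fact tight. The essential observation is that in \eqref{SOHS-n} the only dependence on the trusted outcome $a_1$ enters through the single factor $\Tr[N^{a_1}\rho_{\lambda_1}\otimes\rho_{\lambda_n}]$, while the coefficients $c_{\{a\}_n}$ in \eqref{witness-na} have the property that for each fixed choice of the untrusted outcomes $(a_2,\ldots,a_n)$ there is exactly one value of $a_1$ with $c_{\{a\}_n}=1$ --- namely $a_1 \equiv a_2 + \sum_{i=3}^n(-1)^{\sum_{k=2}^{i-1}a_k}a_i \pmod 4$. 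So $W_n$ can be rewritten, after pulling the sums over $k$ and $\{\lambda\}$ outside (legitimate by linearity, as already argued in the text preceding \eqref{eq13}), as a convex combination over hidden variables of expressions of the form $\Tr[N^{a_1(\{a\}_{n/1})}\rho_{\lambda_1}\otimes\rho_{\lambda_n}]$ summed against the remaining conditional probabilities $\prod_{i=2}^n p(a_i|\lambda_{i-1},\lambda_i)$.

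The first step is therefore to establish the elementary bound $\Tr[N^{a_1}\rho_{\lambda_1}\otimes\rho_{\lambda_n}]\le \tfrac12$ for every Bell-basis projector $N^{a_1}$ from \eqref{Amea1} and every pair of single-qubit states $\rho_{\lambda_1},\rho_{\lambda_n}$. This is because $\langle\phi^\pm|(\rho\otimes\sigma)|\phi^\pm\rangle$ and $\langle\psi^\pm|(\rho\otimes\sigma)|\psi^\pm\rangle$ are each at most $\tfrac12$: writing $\rho,\sigma$ in the Bloch form and computing, or more simply noting that each Bell projector has reduced states equal to $\tfrac12\I$, so $\Tr[N^{a_1}(\rho\otimes\sigma)]=\tfrac12\langle\chi|(\text{something})|\chi\rangle\le\tfrac12$; I would state it as a one-line lemma (it is exactly the inequality used in the displayed but commented-out proof of Fact~\ref{fact1}).

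The second step is the bookkeeping: substitute this bound into the rewritten $W_n$ to get
\begin{eqnarray}
W_n \le \frac{1}{2}\sum_{\{\lambda\}}\prod_{j=1}^{n}p(\lambda_j)\sum_{a_2,\ldots,a_n}\prod_{i=2}^{n}p(a_i|\lambda_{i-1},\lambda_i),
\end{eqnarray}
where the sum over $a_1$ has already been performed (it contributed the single surviving term). Now the sum over $a_2,\ldots,a_n$ of $\prod_{i=2}^n p(a_i|\lambda_{i-1},\lambda_i)$ telescopes to $1$ by normalization of each conditional distribution, and $\sum_{\{\lambda\}}\prod_j p(\lambda_j)=1$, giving $W_n\le\tfrac12$. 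The key structural fact making the $a_1$-sum collapse to one term --- the uniqueness of the solution $a_1$ of the modular equation for fixed $(a_2,\ldots,a_n)$ --- is immediate since $c_{\{a\}_n}$ is an indicator of a single residue class mod $4$; it is worth spelling out that the "reverse coarse-graining" \eqref{rcg} is precisely designed to preserve this one-solution property inductively from $W_{n-1}$.

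Finally, tightness: one can exhibit a SOHS (indeed classical-deterministic) model attaining $\tfrac12$, e.g. each source sends $\lambda_j$ uniform over $\{0,1\}$, the trusted party receives $\rho_{\lambda_1}=\ket{\lambda_1}\!\bra{\lambda_1}$ and $\rho_{\lambda_n}=\ket{\lambda_n}\!\bra{\lambda_n}$ (so $\Tr[N^{a_1}\rho_{\lambda_1}\otimes\rho_{\lambda_n}]=\tfrac12$ whenever $a_1\in\{0,1\}$ matches the parity of $\lambda_1\oplus\lambda_n$ appropriately), and the untrusted parties output deterministic functions of their incoming variables chosen so the modular constraint is met --- the same construction that saturates Fact~\ref{fact1} extended along the ring. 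I expect the only mild obstacle to be the indexing in the telescoping/uniqueness argument, i.e.\ verifying carefully that \eqref{witness-na} indeed has exactly one nonzero coefficient per $(a_2,\ldots,a_n)$; once that is in hand the bound is a one-line normalization computation, exactly parallel to Fact~\ref{fact1}.
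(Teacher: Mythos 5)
Your proposal is correct and follows essentially the same route as the paper's own proof: the key observation that for each fixed string $(a_2,\ldots,a_n)$ exactly one $a_1$ has $c_{\{a\}_n}=1$, the elementary bound $\Tr{[N^{a_1}\rho_{\lambda_1}\otimes\rho_{\lambda_n}]}\leq\frac{1}{2}$ for Bell projectors on product states, and the normalization of the untrusted conditionals, giving $W_n\leq\frac{1}{2}$. The only minor differences are that the paper runs this computation directly on the more general TSOHS model \eqref{SOHS-nc} (so the same bound simultaneously covers the topologically robust case), whereas you use the product-form SOHS decomposition \eqref{SOHS-n}, and you make the saturating deterministic model explicit where the paper leaves it implicit.
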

The proof of the above is stated in Fact 1 of the appendix where we find the upper bound for more general correlations admitting a topologically robust SOHS model stated below \eqref{SOHS-nc}. 
Consider now the following quantum experiment. Each source prepares the state $\ket{\psi_i}=\ket{\phi_+}_{A_i^2A_{i+1}^1}$ and each node measures in the same basis as the trusted node, that is, $M_{A_{n}}=\{\proj{\phi_{+}},\proj{\phi_{-}},\proj{\psi_{+}},\proj{\psi_{-}}\}_{A_{n}^1,A_n^2}$ where the corresponding states are given in \eqref{Amea1}. 
By assuming that $W_{n-1} = 1$ using the above quantum strategy, one can prove by induction that $W_n = 1$ using the same strategy. This is understood as the Bell basis measurement at the $n-$th node simply swapping entanglement to the $(n-1)-$th and 1-st node in a manner consistent with the coarse-graining scheme \eqref{rcg}. These states will in fact correspond to a Bell basis and thus it reduces to a network with $n-1$ nodes with the above-stated quantum states and measurements. This is also proven in Fact 2 of the appendix.
Notice that this is also the algebraic bound of $W_n$. Thus, the experiment described above is an example of the maximal quantum violation of the steering inequality. Similar to the triangle network,  all the sources must necessarily distribute entangled states and all the nodes must measure an entangled basis to generate $n-$ring swap-steering correlations.


{\subsection{Topologically Robust Swap Steering}}
Let us now consider the scenario where the $n-1$ untrusted parties are conncected via addtional sources in an arbitrary way. The strongest such scenario would be one where the untrusted parties and the sources connecting them are morphed into one single party that receives the hidden variables $\lambda_1$ and $\lambda_n$, and outputs a distribution of $n-1$ outcomes [see Fig. \ref{fig2}]. 
Here we will prove that the SOHS bound for this scenario will remain the same. Note that we still require that the trusted party receives a separable two-qubit state. This will also serve as the bound for scenarios in which the untrusted parties posses weaker connections among themselves, for example the ring scenario considered previously. The SOHS models for this scenario will be termed as topologically robust SOHS (TSOHS) models and will be given by 
\begin{eqnarray}\label{SOHS-nc}
  p(\{a\}_n)= \sum_{\lambda_1,\lambda_n} p(\lambda_{1},\lambda_n)\Tr{[N^{a_1}\rho_{\lambda_1}\otimes\rho_{\lambda_n}]} \nonumber\\p(\{a\}_{n-1/1}|\lambda_1,\lambda_n).
\end{eqnarray}
Formally, we arrive at the above expression of TSOHS models by modifying \eqref{SOHS-n}, imposing weaker conditions than outcome-independence \eqref{ass1} and separable sources \eqref{ass2} given by:
\setcounter{thm}{2}
\begin{assu}[Outcome-independence of the trusted party]\label{ass1} The measurement outcome of the trusted party is independent of any other outcome if one has access to the hidden variables $\lambda_i$ from all the sources they are connected to.
\begin{eqnarray}\label{oi-eq2}
p(a_1|\{\lambda\},\{a\}_{n-1/1})=p(a_1|\lambda_1,\lambda_n)
\end{eqnarray}
or equivalently,
\begin{equation}\label{oi-eq3}
p(\{a\}_{n-1/1}|\{\lambda\},a_1)=p(\{a\}_{n-1/1}|\lambda_1,\lambda_n)\ \ 
\end{equation}
\end{assu}
\begin{assu}[Separable state with trusted party]\label{ass2} The joint state sent by the source $S_1,S_n$ to the trusted party is separable, that is, $\rho_{\lambda_1,\lambda_n}=\sum_{k}p^{(k)}\rho_{\lambda_1}^{(k)}\otimes\rho_{\lambda_n}^{(k)}$.
\end{assu}
As we consider linear witnesses, without loss of generality, we consider the joint state received by the trusted party to be a product state.
It is straightforward to observe that even for this scenario by considering the witness $W_n$, we have $W_n\leq\beta_{TSOHS}$ such that $\beta_{TSOHS}=\frac{1}{2}$. The proof is specified in Fact 1 of the Appendix.

Consequently, the observed swap-steering inequality from the violation of \eqref{SOHS-nc} is robust against any network structure among the untrusted parties. This would also include cases when the untrusted parties might communicate with each other. Consequently, one can detect a quantum advantage in a network against much more powerful classical models, which is impossible in the standard multipartite quantum steering scenarios. Also, notice from Fact \ref{fact5} that the gap between the quantum and SOHS values is considerably large and thus it can be observed in experiments. Consequently, this is the first example of a topologically noise robust advantage in a quantum network. 

Notice that with no information about the network structure, observing correlations compatible with $n$ party steering does not imply that the correlation was generated genuinely by $n$ different parties. This is simply an inverse of the topological robustness property of swap-steering. For example, consider the experiment on the minimal scenario \cite{Sarkar2024networkquantum} with a single untrusted party generating maximal swap-steering violation. The untrusted party can simply use the refinement scheme [Eq. \eqref{rcg}] used to extend the witness to the $n$ party scenario to his output and report a $n-1$ string bit. This by definition, will result in statistics that maximally violate the $n$ party witness $W_n$. Thus, one cannot naively conclude sucessful $n$ party steering by only observing the correct statistics, it must be supplemented with some information about the network structure. Note that this is a core feature of topological robustness which allows us to observe steering without any information about the network structure among untrusted parties. 


{\subsection{All bipartite entangled states generate swap-steering correlations in the $n-$ring network}}
Let us finally show that every bipartite entangled state generates swap-steerable correlations in the $n-$ring network. The following construction is inspired from \cite{Branciard_2013, sarkar2024_every}.
Consider again the $n-$ring network [see Fig. \ref{fig2}], such that the trusted party, instead of performing a single measurement, has access to a tomographically complete set of measurements. Furthermore, we allow the trusted party to receive quantum states of arbitrary dimensions from both the sources connected to it. Thus, the trusted node $A_1$ has access to $d_{A_{1}^1}^2d_{A_{1}^2}^2$ measurements, where $A_{1}^1, A_{1}^2$ represent the two subsystems received by the trusted node of dimension $d_{A_{1}^1},d_{A_{1}^2}$ respectively. The measurements of the trusted party are denoted as $\mathcal{A}_{st}=\{\tau_s\otimes\omega_t,\I-\tau_s\otimes\omega_t\}$ where $\tau_s\in \mathbb{C}^{d_{A_1^1}}, \omega_t\in \mathbb{C}^{d_{A_1^2}}$ are projectors such that $\{\tau_s\}_s,\{ \omega_t\}_t$ span the Hilbert spaces $\mathbb{C}^{d_{A_1^1}}, \mathbb{C}^{d_{A_1^2}}$ respectively and thus are tomographically complete. Consequently, we have that $s=\{1,\ldots,d_{A_1^1}^2\}$ and $t=\{1,\ldots,d_{A_1^2}^2\}$. 
All the other parties perform a single measurement. Moreover, every measurement now results in binary outcomes denoted as $a_i=0,1$ for $i=1,\ldots,n$.

Consider now an entangled state $\tilde{\rho}\in\mathbb{C}^{d_{A_1^1}}\otimes\mathbb{C}^{d_{A_1^2}}$ such that the corresponding entanglement witness is given by $W_{\tilde{\rho}}$, that is, $\Tr(W_{\tilde{\rho}}\tilde{\rho})<0$ and  $\Tr(W_{\tilde{\rho}}\sigma)\geq0$ for any seperable state $\sigma$. Now, any operator $W_{\tilde{\rho}}$ can be expressed using the tomographically complete set of operators stated above as $W_{\tilde{\rho}}=\sum_{s,t}\beta_{\tilde{\rho},s,t}\tau_s\otimes\omega_t$. Utilising this entanglement witness, let us now propose the following $n-$ring swap-steering inequalities 
\begin{eqnarray}\label{Wituniv}
    \mathcal{S}_{\tilde{\rho}}=-\sum_{s,t}\beta_{\tilde{\rho},s,t}p(0,0,\ldots,0|st)
\end{eqnarray}
where $p(0,0,\ldots,0|st)$ is the probability of obtaining outcome $0$ by all the parties given the input $s,t$ of the trusted party $A_1$.
Let us now find its SOHS bound.
\begin{fakt}
    Consider the $n-$ring swap-steering scenario  and the functional $\mathcal{S}_{\tilde{\rho}}$ \eqref{Wituniv}. The maximal value attainable of $\mathcal{S}_{\tilde{\rho}}$ using an SOHS model is $\beta_{SOHS}=0$. 
\end{fakt}

The proof of the above fact is in the Appendix.
Let us now consider that the source connecting the parties $A_2,A_3$ generates the state $\tilde{\rho}_{A_2^1,A_3^2}\in\mathbb{C}^{{d_{A_1^1}}}\otimes\mathbb{C}^{d_{A_1^2}}$. Now the source connecting the trusted party $A_1$ with $A_2$ produces the maximally entangled state of dimension $d_{A_1^1}$, that is,  $\ket{\psi_{A_1^1,A_2^2}}=\ket{\phi^+_{d_{A_1^1}}}$. Rest of the other sources generate the maximally entangled state $\ket{\phi^+_{d_{A_1^2}}}$.  Moreover, $A_2$ performs the measurements $\{\proj{\phi^+}_{d_{A_1^1}},\I-\proj{\phi^+}_{d_{A_1^1}}\}, $ and rest of the parties $A_3,\ldots,A_n$ perform $\{\proj{\phi^+}_{d_{A_1^2}},\I-\proj{\phi^+}_{d_{A_1^2}}\}$ respectively. Now, consider a party $A_i$ for $i=4,\ldots,n$ obtains outcome $0$, then the post-measurement state between $A_{i-1},A_{i+1} (n+1\equiv 1)$ is the maximally state of dimension $d_{A_{1}^2}$. Consequently, it is straightforward to observe that the $n-$network reduces to a triangle network such that $A_1,A_3$ share $\ket{\phi^+_{d_{A_1^2}}}$. Now, $A_2,A_3$ projecting the joint state over their $0-$th outcome swaps the state $\tilde{\rho}$ to the trusted $A_1$ and thus one is left with the expression
\begin{eqnarray}
    \mathcal{S}_{\tilde{\rho}}&=&-\frac{1}{d_{A_1^1}^2d_{A_1^2}^{n-1}}\sum_{s,t}\beta_{\tilde{\rho},s,t}\Tr(\tau_{s}\otimes\omega_{t}\tilde{\rho}_{A_1^1A_1^2})\nonumber\\&=&-\frac{1}{d_{A_1^1}^2d_{A_1^2}^{n-1}}\Tr(W_{\tilde{\rho}}\ \tilde{\rho})>0.
\end{eqnarray}
Notice that the factor $1/{d_{A_1^1}^2d_{A_1^2}^{n-1}}$ takes into account the probability of obtaining outcome $0$ by all the parties. 
Consequently, every entangled state results in $n-$ring swap-steerable correlations. It is important to note here that although the above prescription closely follows \cite{Branciard_2013}, the sources connected to the trusted party are also trusted in \cite{Branciard_2013} which is not the case here.

{\section{Discussions}}
In this work we extend the results of \cite{Sarkar2024networkquantum} by demonstrating swap-steering on the triangle network with one trusted party. Additionally, we show that swap-steering can be detected only if all states generated by the sources and measurements at the nodes are entangled. We also demonstrate a weak self-testing scheme for states and measurements on the triangle network.  In the appendix, we provide a noise tolerance analysis of the swap-steering witness in the triangle network. In particular, we consider the case when the state and measurement that generates the maximal value of the witness are mixed with noise and then find the amount of noise that would erase the quantum violation. Then, we extend the construction of the witness and the quantum experiment, which maximally violates the SOHS bound, to the $n-$ring network. We prove that our witness can distinguish between quantum and SOHS models, even if we cannot trust the network structure between the untrusted parties with a large noise tolerance.  While the nonlocal distributions in \cite{boreiri2024topologicallyrobustquantumnetwork} were also recently shown to be topologically robust, it is very sensitive to noise \cite{boreiri2024noiserobustproofsquantumnetwork,Kriv_chy_2020}. The combined topological and noise robust character makes this swap-steering scenario an excellent candidate for experimental implementations of quantum network advantage. 
Finally, we demonstrated that if the trusted party has access to complete tomography then one can detect swap-steering of all bipartite entangled states. 

There are a few interesting open problems that follow up from this work. 
The additional assumption of trusted measurement drastically eases the analysis, allowing one to witness steering using linear functionals
An interesting question would be to study the space of quantum correlations that could be obtained in the network with trusted measurement. 
Network non-locality relies on the assumptions of independent sources and a fixed network structure. However, these assumptions require control over the environment outside a trusted lab, which is undesirable. Topological robustness allows us to drop these assumptions, which might enable secure communication tasks on networks. It will also be interesting to observe how much trust is needed in one of the parties to observe swap-steering similar to \cite{Sarkar_distrust}.

 {\section{Acknowledgements}}We thank Stefano Pironio for mentoring this work and providing crucial insights. 
 This project was funded within the QuantERA II Programme (VERIqTAS project) that has received funding from the European Union’s Horizon 2020 research and innovation programme under Grant Agreement No 101017733. S.S. also acknowledges National Science Centre, Poland, grant Opus 25, 2023/49/B/ST2/02468.

\providecommand{\noopsort}[1]{}\providecommand{\singleletter}[1]{#1}%

 \onecolumngrid
\appendix

\setcounter{fakt}{0}
\section{Proofs in the Triangle scenario}

The swap-steering witness in the triangle scenario is given by $W_3 = \sum_{a_1,a_2,a_3=0}^3 c_{a_1,a_2,a_3}p(a_1,a_2,a_3)$
\begin{equation}
   c_{a_1,a_2,a_3}=\begin{cases}
       1\ \mathrm{if}\ a_1-a_2-(-1)^{a_2}a_3\ \mathrm{mod}\ 4 = 0\\
       0\ \mathrm{otherwise}.
   \end{cases}
\end{equation}

\subsection{Noise tolerance analysis}
Let us now evaluate the effect of introducing noise in the experiment that maximally violates $\beta_{SOHS}$. As of now, we consider the simplest noise model, that is, all the states are mixed with white noise depending on the parameter $v_i$ and all measurement elements are mixed with white noise parametrized with $w_j$.
Consequently, we model the noisy source as distributing the Werner state given by
\begin{eqnarray}\label{Werner}
    \rho(v_i)_{A_i^2A_{i+1}^1}=v_i\proj{\phi_{+}}_{A_i^2A_{i+1}^1}+(1-v_i)\frac{\I}{4}\qquad i=1,2,\ldots,n.
\end{eqnarray}
Similarly, we will add local measurement noise using the following white noise model
\begin{eqnarray}\label{Werner}
    \tilde{M}^{a_i}_{A_i}(w_i)=w_iN^{a_i}_{A_i}+(1-w_i)\frac{\I}{4} \qquad i=2,\ldots,n.
\end{eqnarray}
Here, $v,w$ are the state and measurement visibility respectively and $N^{a_i}$ refers to all of the Bell basis projectors. As in our scenario, we trust $A_1$ to perform a perfect Bell basis measurement, so we only introduce noisy measurement in the other two nodes. If each source $S_i$ distributes the state $\rho_{A_i^2A_{i+1}^1}(v_i)$ and the other two nodes measure in the basis $M_{A_i}(w_i)$, we evaluate $W_3$ to obtain the following expression.
\begin{eqnarray}
    W_3 = \frac{1}{4}(1+3v_1v_2v_3w_2w_3) = \frac{1}{4}(1+3\Tilde{v}).
\end{eqnarray}
For steering correlations $W_3 > 1/2$, thus $\Tilde{v}>1/3$. Here $\Tilde{v}$ acts as the combined visibility obtained by multiplying all the visibilities. So, to set up an experiment to observe triangle swap-steering, one must arrange for sources and measurements such that $\Tilde{v}>1/3$. 

\subsection{Triangle network weak self-testing}

Before proceeding let us remark here that for ease of notation in the proof we represent $A^{i}_1\equiv A_i, A^{i}_2\equiv B_i, A^{i}_3\equiv C_i$ for $i=1,2$. Moreover the parties $A_1,A_2,A_3$ are called Alice, Bob and Charlie repsectively.
\setcounter{thm}{0}
\begin{thm}\label{Theo1} 
Assume that Alice, Bob and Charlie observe that the correlations $p(a,b,c)$  in the steering functional $W_3$ are equal [see Eq. \eqref{witness-na1} below for $n=3$] along with $W_3=1$ such that trusted Alice performs the Bell-basis. Considering identical sources, such that the states generated by them are pure and measurements of untrusted parties are projective, the following statements hold true:
\\
\\
1.  \ \  There exist unitary transformation $U$,  such that the state generated by the sources are certified as
\begin{eqnarray}\label{lem1.2}
(\mathbbm{1}_{A_1}\otimes U_{B_2})\ket{\psi_{A_1B_2}}=(\mathbbm{1}_{A_2}\otimes U_{C_1})\ket{\psi_{A_2C_1}}=(\mathbbm{1}_{B_1}\otimes U_{C_2})\ket{\psi_{B_1C_2}}=\ket{\phi^+}.\ \ 
\end{eqnarray}
\\
\\
2.  \ \  The measurement of Bob $\{M_b\}$ and Charlie $\{M_c\}$ is certified as
\begin{eqnarray}\label{lem1.1}
[\I_{B_1}\otimes U_{B_2}]\,M_b\,[\I_{B_1}\otimes U_{B_2}^{\dagger}]=\proj{\phi_b},\quad [U_{C_1}\otimes U_{C_2}]\,M_c\,[U_{C_1}^{\dagger}\otimes U_{C_2}^{\dagger}]=\proj{\phi_c}.
\end{eqnarray}
\end{thm}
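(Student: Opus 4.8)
My plan is to work in the honest quantum model stipulated before the theorem—three independent sources, each emitting the same pure state $\ket{\psi}$, with Bob and Charlie performing projective rank-one four-outcome measurements $\{M_{A_2}^b=\proj{\beta_b}\}$, $\{M_{A_3}^c=\proj{\gamma_c}\}$—and to note at the outset that, since Alice's Bell measurement acts on two qubits, the two sources touching her have Schmidt rank at most two; the identical-sources hypothesis then forces the third source to carry the same (effectively two-qubit) state, so every subsystem may be taken to be a qubit.

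First I would convert the correlation hypothesis into a statement about Alice's conditional states. Reading off the coarse-graining \eqref{rcg}, $W_3=\sum_{b,c}p(f(b,c),b,c)$ with $f(b,c)=b\oplus c$ under the identification of the four outcomes with $\mathbb{Z}_2\times\mathbb{Z}_2$; since $W_3\le\sum_{b,c}p(b,c)=1$, the hypotheses ``$W_3=1$ and all sixteen terms of $W_3$ equal'' force $p(b,c)=\tfrac1{16}$ and $p(a,b,c)=\tfrac1{16}\,\delta_{a,f(b,c)}$. As the trusted Bell projectors are rank one, this determinism pins Alice's sub-normalised conditional state $\sigma_{bc}$ to the one-dimensional range of $N_{A_1}^{f(b,c)}$, i.e.\ $\sigma_{bc}=\tfrac1{16}\proj{\phi_{b\oplus c}}$. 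Summing over $b,c$, and using that $c\mapsto f(b,c)$ is a bijection for each fixed $b$ and that $\sum_a\proj{\phi_a}=\mathbbm{1}$, Alice's two-qubit reduced state is $\tfrac14\mathbbm{1}$; independence of the sources makes it a product of her two single-qubit marginals, so each equals $\tfrac12\mathbbm{1}$, whence the common source state $\ket{\psi}$ has maximally mixed marginals, i.e.\ it is maximally entangled, $\ket{\psi}=(\mathbbm{1}\otimes U)\ket{\phi^+}$. Absorbing these $U$'s into the untrusted local frames makes every source $\ket{\phi^+}$ and conjugates $\{M_{A_2}^b\},\{M_{A_3}^c\}$ by local unitaries; this is \eqref{lem1.2}.

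For \eqref{lem1.1} I would contract $\bra{\beta_b}$ on $B_1B_2$ and $\bra{\gamma_c}$ on $C_1C_2$ against $\ket{\phi^+}_{A_1C_2}\otimes\ket{\phi^+}_{A_2B_1}\otimes\ket{\phi^+}_{B_2C_1}$, using the swapping identity $\bra{k}_{B_1}\ket{\phi^+}_{A_2B_1}=\tfrac1{\sqrt2}\ket{k}_{A_2}$ and its analogues. Under the matricisation $\ket{M}=\sum_{ij}M_{ij}\ket{i}\ket{j}$ of $\mathbb{C}^2\otimes\mathbb{C}^2$, a short computation identifies Alice's conditional vector on $A_1A_2$ with the matrix $(B_b\Gamma_c)^\dagger$ up to a fixed positive scalar, $B_b,\Gamma_c$ being the matricisations of $\ket{\beta_b},\ket{\gamma_c}$; comparing with $\ket{\phi_a}\leftrightarrow\sigma_a$ (the Bell state matricising to a Pauli up to phase, with $\sigma_0=\mathbbm{1},\sigma_1=Z,\sigma_2=X,\sigma_3=Y$, so $\sigma_b\sigma_c\propto\sigma_{b\oplus c}$) gives $B_b\Gamma_c\propto\sigma_{b\oplus c}$ for all sixteen pairs. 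At $(b,c)=(0,0)$ this shows $B_0$ invertible with $\Gamma_0\propto B_0^{-1}$; at $(b,0)$ it gives $B_b=e^{i\theta_b}\sigma_bB_0$, and orthonormality $\mathrm{Tr}(B_b^\dagger B_{b'})=\delta_{bb'}$ of $\{\ket{\beta_b}\}$ then forces $B_0B_0^\dagger\propto\mathbbm{1}$, i.e.\ $B_0=\tfrac1{\sqrt2}V$ with $V$ unitary; matricising back, $\ket{\beta_b}\propto(\mathbbm{1}_{B_1}\otimes V^{T}_{B_2})\ket{\phi_b}$, which is \eqref{lem1.1} for Bob. At $(0,c)$ one gets $\Gamma_c\propto B_0^{-1}\sigma_c$, hence $\ket{\gamma_c}\propto(V^{-1}_{C_1}\otimes\mathbbm{1}_{C_2})\ket{\phi_c}$, which is \eqref{lem1.1} for Charlie (a single $V$ governs both measurements). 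The remaining nine constraints require no check: $B_b\Gamma_c\propto\sigma_b\sigma_c\propto\sigma_{b\oplus c}$ holds automatically by the Pauli relations—a consistency check that also closes the argument.

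The step I expect to be the main obstacle is this last one: keeping the matricisation/transpose bookkeeping consistent so that the certified unitaries land on exactly the prescribed subsystems and dovetail with the $U$'s of \eqref{lem1.2}, and recognising that the ostensibly sixteen matrix equations $B_b\Gamma_c\propto\sigma_{b\oplus c}$ collapse—through the $(0,0)$, $(b,0)$ and $(0,c)$ rows together with the normalisations coming from $\{\ket{\beta_b}\},\{\ket{\gamma_c}\}$ being orthonormal bases of $\mathbb{C}^2\otimes\mathbb{C}^2$—onto the single unitary $V$, with the Pauli group structure supplying the rest. A secondary subtlety needing care is the reduction to qubits for the source $S_2$ between Bob and Charlie, which rests on the identical-sources assumption and not merely on Alice's systems being qubits.
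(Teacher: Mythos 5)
Your route is correct in substance and genuinely different from the paper's. For the states, you convert the hypothesis into $p(a,b,c)=\tfrac1{16}\delta_{a,b\oplus c}$, pin Alice's sub-normalised conditional states to single Bell projectors, and read off that her two-qubit marginal is $\tfrac14\mathbbm{1}$; with the product structure of independent sources and purity this makes both Alice-touching sources maximally entangled, the third following from the identical-sources assumption. The paper instead parametrises the common state through $P=\sqrt2\sum_i\lambda_i\proj{e_i^*}$ and extracts $\lambda_0=\lambda_1=\tfrac1{\sqrt2}$ from the trace condition $\Tr(P_{B_1}^2P_{C_1}^2)\Tr(P_{C_2}^{*2})=4$; your marginal argument is more elementary. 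For the measurements, your matricisation constraints $\Gamma_cB_b\propto\sigma_{b\oplus c}$, collapsed through the $(0,0)$, $(b,0)$, $(0,c)$ rows and orthonormality onto a single unitary $V$, carry the same content as the paper's construction of the unitary observables $B_0=\sum_b i^b U M_bU^{\dagger}$ and $C_0$, the identity $B_0=C_0$, the $\sigma_x,\sigma_z,\sigma_z\sigma_x$ conjugation relations and the eigenvector-completeness step, and arguably exhibit the full solution set more transparently. (Your rank-one assumption on $M_b$ should be justified rather than posited: it follows from projectivity, the effective two-qubit supports, and $p(b,c)=\tfrac1{16}>0$.)

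There are, however, two concrete issues in your final step, precisely in the bookkeeping you flagged. First, the residual unitary lands on the wrong subsystems: carrying the matricisation through consistently (with the $B_1$ index of $\ket{\beta_b}$ contracted against the $C_2$ index of $\ket{\gamma_c}$ through $\ket{\phi^+}_{B_1C_2}$), one finds $\ket{\beta_b}\propto(V_{B_1}\otimes\mathbbm{1}_{B_2})\ket{\phi_b}$ and $\ket{\gamma_c}\propto(\mathbbm{1}_{C_1}\otimes \bar{V}_{C_2})\ket{\phi_c}$, i.e.\ $V$ sits on $B_1$ and $C_2$ (the two halves of the Bob--Charlie source), not on $B_2$ and $C_1$ as you wrote; a nontrivial rotation on the Alice-facing halves would in general destroy the perfect correlations with the trusted Bell measurement, so it cannot be where the freedom lives. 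Second, once placed correctly this $V$ is a genuine leftover: since $(V_{B_1}\otimes\bar{V}_{C_2})\ket{\phi^+}_{B_1C_2}=\ket{\phi^+}_{B_1C_2}$, the family with all sources equal to $\ket{\phi^+}$, Bob's basis rotated by $V$ on $B_1$ and Charlie's by $\bar V$ on $C_2$ reproduces $p(a,b,c)=\tfrac1{16}\delta_{a,b\oplus c}$ for every unitary $V$, while the single $U$ fixed by \eqref{lem1.2} cannot absorb it. So your closing ``consistency check'' does not deliver the literal equalities \eqref{lem1.1}; it delivers them up to this gauge unitary on the $B_1$/$C_2$ link. To be clear, this is not a defect of your method relative to the paper: the same unconstrained unitary appears in the paper's last step (the $V$ in $V\otimes\I\ket{\xi}=P\otimes\I\ket{\phi^+}$) and is silently dropped there, and it is physically immaterial because it merely relabels the local Schmidt frames of the Bob--Charlie source. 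You should state this residual freedom explicitly rather than claim exact coincidence with \eqref{lem1.1}.
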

\begin{proof}
Considering now that the states generated by the identical sources are pure and measurements are projective. As Alice's measurement is trusted and known to perform the Bell-basis which acts on $\mathbb{C}^2\otimes\mathbb{C}^2$, the states $\ket{\psi}$ shared between Alice and Bob can be written using Schmidt decomposition as 
\begin{eqnarray}\label{state1}
  \ket{\psi_{1}}= \ket{\psi_{A_1B_2}}=\sum_{i=0,1}\lambda_{i}\ket{e_{i}}_{A_1}\ket{f_{i}}_{B_2} 
\end{eqnarray}
where $\lambda_{i,1}\geq0$ and $\{\ket{e_{i,1}}\},\{\ket{f_{i,1}}\}$ form an orthonormal basis for each $i$. As the sources are identical, the state shared among Alice-Charlie and Bob-Charlie is the same as \eqref{state1} given as
\begin{eqnarray}
    \ket{\psi_{2}}= \ket{\psi_{B_1C_2}}=\sum_{i=0,1}\lambda_{i}\ket{e_{i}}_{B_1}\ket{f_{i}}_{C_2} 
\end{eqnarray}
and,
\begin{eqnarray}
\ket{\psi_{3}}=\ket{\psi_{A_2C_1}}=\sum_{i=0,1}\lambda_{i}\ket{e_{i}}_{A_2}\ket{f_{i}}_{C_1}.
\end{eqnarray}
Now applying a unitary $U$ on these states such that $U\ket{f_{i}}=\ket{e^*_{i}}$ gives us for any $s=1,2,3$
\begin{eqnarray}
\ket{\tilde{\psi}_s}=\I\otimes U\ket{\psi}=\sum_{j=0,1}\lambda_{i}\ket{e_{i}}\ket{e^*_{i}}.
\end{eqnarray}
Now, notice that the state on the right-hand side can be represented as
\begin{eqnarray}\label{state1}
\ket{\tilde{\psi}_s}= \I\otimes P\ket{\phi^+}=P^{*}\otimes\I\ket{\phi^+}
\end{eqnarray}
where
\begin{eqnarray}\label{P}
    P=\sqrt{2}\sum_{j=0,1}\lambda_{i}\proj{e_{i}^*}.
\end{eqnarray}

Now, considering all $p(0,b,c)$ for all $b,c=0,1,2,3$ such that $p(0,b,b)=1/16$ and $p(0,b,c)=0$ for $b\ne c$,  we have that
\begin{eqnarray}
    p(0,b,c)=\bra{\psi_1}\bra{\psi_2}\bra{\psi_3}\left[\proj{\phi_0}_{A_1A_2}\otimes M_{b,B_1B_2}\otimes N_{c,C_1C_2}\right]\ket{\psi_1}\ket{\psi_2}\ket{\psi_3}.
\end{eqnarray}
Now, using \eqref{state1}, we have from the above formula
\begin{eqnarray}\label{43}
    p(0,b,c)=\bra{\phi^+_{A_1B_2}}\bra{\phi^+_{B_1C_2}}\bra{\phi^+_{A_2C_1}}\left[\proj{\phi_0}_{A_1A_2}\otimes \tilde{M}_{b,B_1B_2}\otimes \tilde{N}_{c,C_1C_2}\right]\ket{\phi^+_{A_1B_2}}\ket{\phi^+_{B_1C_2}}\ket{\phi^+_{A_2C_1}}
\end{eqnarray}
where $\tilde{M}_{b}=P^*_{B_1}\otimes P_{B_2}[ U_{B_2} M_b U_{B_2}^{\dagger}] P^*_{B_1}\otimes P_{B_1},\ \tilde{N}_{c}=P_{C_1}\left[U_{C_1}\otimes U_{C_2} N_cU_{C_1}^{\dagger}\otimes U_{C_2}^{\dagger}\right]P_{C_1}$ where $P$ is given in Eq. \eqref{P}. Let us now notice that $\ket{\phi^+_{A_1B_2}}\ket{\phi^+_{A_2C_1}}=\ket{\phi^+_4}_{A_1A_2|C_1B_2}$ where $\ket{\phi^+_{4}}$ is the maximally entangled state of local dimension $4$. Furthermore, $\Tr_{A_1A_2}[\proj{\phi_a}_{A_1A_2}\otimes\I_{C_1B_2}\proj{\phi^+_4}]=\frac{1}{4}\proj{\phi_a}_{C_1B_2}.$ Consequently, we have from the above formula \eqref{43} that
\begin{eqnarray}
\bra{\phi^+_{B_1C_2}}\bra{\phi^+_{C_1B_2}}\left[\tilde{M}_{b,B_1B_2}\otimes \tilde{N}_{c,C_1C_2}\right]\ket{\phi^+_{B_1C_2}}\ket{\phi^+_{C_1B_2}}=4p(0,b,c).
\end{eqnarray}
Again using the fact that $\ket{\phi^+_{C_1B_2}}\ket{\phi^+_{C_2B_1}}=\ket{\phi^+_4}_{B_1B_2|C_1C_2}$ and thus we have from the above formula
\begin{eqnarray}\label{29}
\bra{\phi^+_{4}}\tilde{M}_{b,B_1B_2}\otimes \tilde{N}_{c,C_1C_2}\ket{\phi^+_{4}}=4p(0,b,c).
\end{eqnarray}
Now, summing the above expression over $b,c$ and using the fact that $\sum_b\tilde{M}_b=P_{B_1}^{*2}\otimes P_{B_2}^2$ and $\sum_c\tilde{M}_c=P_{C_1}^2$ along with $\sum_{b,c}p(0,b,c)=1/4$, gives us
\begin{eqnarray}
    \bra{\phi^+_{4}}P_{B_1}^{*2}\otimes P_{C_2}^2\otimes P_{C_1}^2\otimes \I_{C_2}\ket{\phi^+_{4}}=1.
\end{eqnarray}
Using the identity $\bra{\phi^+_d}R\otimes Q\ket{\phi^+_d}=\frac{1}{d}\Tr (QR^T)$ for any two operators $R,Q$, we obtain from the above expression
\begin{eqnarray}
    \Tr(P_{B_1}^2 P_{C_1}^2)\Tr( P_{C_2}^{*2})=4.
\end{eqnarray}
Using the fact that $\Tr( P_{C_2}^{*2})=2$ and then exploiting the form of $P$ \eqref{P}, we obtain the following condition
\begin{eqnarray}
    2\sum_{i=0,1}\lambda_{i}^4=1
\end{eqnarray}
which using the fact that $\sum_{i=0,1}\lambda_{i}^2=1$ gives us $\lambda_0=\lambda_1=\frac{1}{\sqrt{2}}$.
Consequently, we have that $P=\I$ and thus the states shared among the parties are maximally entangled up to local unitary transformation. 

Let us now certify the measurements. For this purpose, we again consider Eq. \eqref{29} with the fact that $P=\I$ and express it in the observable form as
\begin{eqnarray}\label{50}
    4\sum_bp(0,b,b)=\frac{1}{4}\sum_{x=0}^3\langle\phi^+_4| B_0^x\otimes C_0^{4-x}|\phi^+_4\rangle=1
\end{eqnarray}
where $B_0=\sum_{b=0}^3i^b  U_{B_2} M_b U_{B_2}^{\dagger},$ $C_0=\sum_{c=0}^3i^c U_{C_1}\otimes U_{C_2} N_cU_{C_1}^{\dagger}\otimes U_{C_2}^{\dagger}$. Notice that in expanding the expectation value, one obtains more terms than the right-hand side of the above expression \eqref{50}, however, we use the fact that $p(0,b,c)=0$ for $b\ne c$. 
As $B_0,C_0$ are unitaries, we have that
\begin{eqnarray}
   \langle\phi^+_4| B_0^x\otimes C_0^{4-x}|\phi^+_4\rangle=1\qquad \forall x
\end{eqnarray}
which allows us to conclude that
\begin{eqnarray}
    B_0^x\otimes C_0^{4-x}|\phi^+_4\rangle=|\phi^+_4\rangle
\end{eqnarray}
which allows us to obtain for $x=1$ that $  C_0= B_0.$ Considering now the probabilities $p(a,b,c)$ for $a=1$, we have that
\begin{eqnarray}
    p(1,0,1)+p(1,1,0)+p(1,2,3)+p(1,3,2)=\frac{1}{16}\sum_{x=0}^3(-i)^x\bra{\psi^+_{B_2C_1}}\bra{\phi^+_{B_1C_2}}B_0^x\otimes C_0^x\ket{\psi^+_{B_2C_1}}\ket{\phi^+_{B_1C_2}}=\frac{1}{4}.
\end{eqnarray}
Since each of the expectation values in the above expression is upper bounded by $1$, we have that each term must be $1$ and thus we get for $x=1$
\begin{eqnarray}
\bra{\psi^+_{B_2C_1}}\bra{\phi^+_{B_1C_2}}B_0\otimes (-iC_0)\ket{\psi^+_{B_2C_1}}\ket{\phi^+_{B_1C_2}}=1.
\end{eqnarray}
Notice that $\ket{\psi^+_{B_2C_1}}=\I_{B_2}\otimes \sigma_{x,C_1}\ket{\phi^+_{B_2C_1}}$ where $\sigma_x=\ket{0}\!\bra{1}+\ket{1}\!\bra{0}$ which allows us to conclude from the above formula that
\begin{eqnarray}
\bra{\phi^+_{B_2C_1}}\bra{\phi^+_{B_1C_2}}B_0\otimes \sigma_{x,C_1} (-iC_0)\sigma_{x,C_1}\ket{\phi^+_{B_2C_1}}\ket{\phi^+_{B_1C_2}}=1.
\end{eqnarray}
As all the operators inside the $<.>$ are unitary, we have that
\begin{eqnarray}
    B_0\otimes \sigma_{x,C_1} (iC_0)\sigma_{x,C_1}\ket{\phi^+_4}_{B_1B_2|C_1C_2}=\ket{\phi^+_4}_{B_1B_2|C_1C_2}
\end{eqnarray}
which using the fact that $B_0=C_0$ allows us to conclude that
\begin{eqnarray}
    B_0= (\I\otimes \sigma_x) (-iB_0)^{\dagger}(\I\otimes \sigma_x).
\end{eqnarray}
Consider now an eigenvector $\ket{\xi}$ such that $B_0\ket{\xi}=\ket{\xi}$ which gives us from the above formula
\begin{eqnarray}
    -i(\I\otimes \sigma_x)\ket{\xi}= B_0^{\dagger} (\I\otimes \sigma_x)\ket{\xi}.
\end{eqnarray}
Consequently, $(\I\otimes \sigma_x)\ket{\xi}$ is also an eigenvector of $B_0$ with eigenvalue $i$. 

Similarly, considering all the other probabilities $p(a,b,c)$ for $a=2$, we have that
\begin{eqnarray}
    p(2,0,2)+p(2,2,0)+p(2,1,3)+p(2,3,1)=\frac{1}{16}\sum_{x=0}^3(-1)^x\bra{\phi^-_{B_2C_1}}\bra{\phi^+_{B_1C_2}}B_0^x\otimes C_0^{4-x}\ket{\phi^-_{B_2C_1}}\ket{\phi^+_{B_1C_2}}=\frac{1}{4}.
\end{eqnarray}
As done above, taking $x=1$ in the above expression, we get that using the fact  $B_0=C_0$ that
\begin{eqnarray}
    B_0= -(\I\otimes \sigma_z) B_0(\I\otimes \sigma_z).
\end{eqnarray}
Thus, $(\I\otimes \sigma_z)\ket{\xi}$ is an eigenvector of $B_0$ with eigenvalue $-1$ where $\sigma_z=\proj{0}+\proj{1}$. 

Similarly, considering all the other probabilities $p(a,b,c)$ for $a=3$, we have that
\begin{eqnarray}
    p(3,0,3)+p(3,3,0)+p(3,1,2)+p(3,2,1)=\frac{1}{16}\sum_{x=0}^3i^x\bra{\psi^-_{B_2C_1}}\bra{\phi^+_{B_1C_2}}B_0^x\otimes C_0^{x}\ket{\psi^-_{B_2C_1}}\ket{\phi^+_{B_1C_2}}=\frac{1}{4}.
\end{eqnarray}
As done above, taking $x=1$ in the above expression, we get that using the fact  $B_0=C_0$ that
\begin{eqnarray}
    B_0= -i(\I\otimes \sigma_z\sigma_x) B_0^{\dagger}(\I\otimes \sigma_z\sigma_x).
\end{eqnarray}
Thus, $(\I\otimes \sigma_z\sigma_x)\ket{\xi}$ is an eigenvector of $B_0$ with eigenvalue $-i$. 

Recalling that the sum of eigen-projectors of $B_0$ must sum up to identity, we get that
\begin{eqnarray}
    \proj{\xi}+(\I\otimes \sigma_x)\proj{\xi}(\I\otimes \sigma_x)+(\I\otimes \sigma_z)\proj{\xi}(\I\otimes \sigma_z)+(\I\otimes \sigma_z\sigma_x)\proj{\xi}(\I\otimes \sigma_z\sigma_x)=\I.
\end{eqnarray}
As $\ket{\xi}$ acts on two-qubit space, as done above in \eqref{P} it can written using Schmidt decomposition as $V\otimes\I\ket{\xi}=P\otimes\I\ket{\phi^+}$ and thus the above formula can be written as
\begin{eqnarray}
    P\otimes\I\left[\proj{\phi^+}+(\I\otimes \sigma_x)\proj{\phi^+}(\I\otimes \sigma_x)+(\I\otimes \sigma_z)\proj{\phi^+}(\I\otimes \sigma_z)+(\I\otimes \sigma_z\sigma_x)\proj{\phi^+}(\I\otimes \sigma_z\sigma_x)\right]P\otimes\I=\I.\nonumber\\
\end{eqnarray}

Notice that the sum inside the bracket is $\I$ and thus we get that $P^2=\I$ and thus Bob's and Charlie's measurement is the Bell-basis. This completes the proof.
\end{proof}

\section{Proofs for the $n-$ring scenario}


\subsection{SOHS and quantum bounds of $n-$party swap steering witness}

The steering functional for $n-$networks is given by $W_n=\sum_{\{a\}_n}c_{\{a\}_n} p(\{a\}_n)$ where
\begin{equation}\label{witness-na1}
   c_{\{a\}_n}=\begin{cases}
       1\ \mathrm{if}\ a_1-a_2-\sum_{i=3}^n(-1)^{\sum_{k=2}^{i-1}a_k}a_i\ \mathrm{mod}\ 4 = 0\\
       0\ \mathrm{otherwise}.
   \end{cases}
\end{equation}
Let us find its SOHS bound.

\begin{fakt}
 Consider the $n-$party swap-steering functional $W_n$. The maximum value $\beta_{SOHS}$ that can be achieved using correlations that admit a SOHS model of $W_n$ is $\frac{1}{2}$.
\end{fakt}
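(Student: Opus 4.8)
The plan is to substitute the SOHS form \eqref{SOHS-n} into $W_n=\sum_{\{a\}_n}c_{\{a\}_n}\,p(\{a\}_n)$ and exploit the structure of the mask \eqref{witness-na1}. The decisive combinatorial observation is that for \emphalt{every} tuple $(a_2,\dots,a_n)\in\{0,1,2,3\}^{n-1}$ there is exactly one value $a_1\in\{0,1,2,3\}$ with $c_{\{a\}_n}=1$, namely $a_1=a_2+\sum_{i=3}^n(-1)^{\sum_{k=2}^{i-1}a_k}a_i\ \mathrm{mod}\ 4$; equivalently, the support of $c$ is in bijection with $\{0,1,2,3\}^{n-1}$ by deleting $a_1$. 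One can also see this by induction on $n$, since the ``reverse coarse-graining'' \eqref{rcg} replaces each outcome of the last node by four refined pairs, preserving the property that exactly one $a_1$ contributes. Since $W_n$ is linear we may drop the convex index $k$ and optimise over single preparations, exactly as in the main text.

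Writing $a_1=a_1(a_2,\dots,a_n)$ for the unique admissible value, the substitution gives
\begin{equation}
W_n=\sum_{a_2,\dots,a_n}\sum_{\{\lambda\}}\prod_{j=1}^{n}p(\lambda_j)\,\Tr[N^{a_1}\rho_{\lambda_1}\otimes\rho_{\lambda_n}]\prod_{i=2}^{n}p(a_i|\lambda_{i-1},\lambda_i).
\end{equation}
I then use the elementary bound $\Tr[N^{a_1}\rho_{\lambda_1}\otimes\rho_{\lambda_n}]\le\tfrac12$, valid for every Bell projector $N^{a_1}$ in \eqref{Amea1} and every two-qubit product state: parametrising $\rho_{\lambda_1}=\tfrac12(\I+\vec r\cdot\vec\sigma)$ and $\rho_{\lambda_n}=\tfrac12(\I+\vec s\cdot\vec\sigma)$ gives $\Tr[N^{a_1}\rho_{\lambda_1}\otimes\rho_{\lambda_n}]=\tfrac14\big(1+\textstyle\sum_{k}\varepsilon_k\,r_k s_k\big)$ with signs $\varepsilon_k\in\{+1,-1\}$, which is at most $\tfrac14(1+|\vec r|\,|\vec s|)\le\tfrac12$ (the statement that a product state overlaps a maximally entangled state by at most $\tfrac12$). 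Pulling out this factor,
\begin{equation}
W_n\le\frac12\sum_{a_2,\dots,a_n}\sum_{\{\lambda\}}\prod_{j=1}^{n}p(\lambda_j)\prod_{i=2}^{n}p(a_i|\lambda_{i-1},\lambda_i).
\end{equation}
Because $p(a_i|\lambda_{i-1},\lambda_i)$ depends only on the single outcome $a_i$, the sum over $(a_2,\dots,a_n)$ factorises into $\prod_{i=2}^n\big(\sum_{a_i}p(a_i|\lambda_{i-1},\lambda_i)\big)=1$, and $\sum_{\{\lambda\}}\prod_j p(\lambda_j)=1$, so $W_n\le\tfrac12$. This is saturated: take all $\rho_{\lambda_i}=\proj{0}$ and let every untrusted party output $0$ deterministically, so that $a_1=0$ and $W_n=\Tr[\proj{\phi^+}\proj{00}]=\tfrac12$.

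The same computation yields $\beta_{TSOHS}=\tfrac12$ for the topologically robust model \eqref{SOHS-nc}: there $\prod_{i=2}^np(a_i|\lambda_{i-1},\lambda_i)$ is replaced by a single joint response $p(\{a\}_{n-1/1}|\lambda_1,\lambda_n)$, whose sum over all untrusted outcomes is still $1$; the assumption that the trusted party receives a separable two-qubit state, together with linearity of the witness, again lets $\rho_{\lambda_1,\lambda_n}$ be taken a product state, so the $\tfrac12$ overlap bound applies verbatim. The only genuine obstacle is the bookkeeping in the first step — checking that \eqref{witness-na1} picks out exactly one $a_1$ per tuple $(a_2,\dots,a_n)$, with nothing omitted and nothing double-counted, which for the recursive definition requires tracking the alternating signs in \eqref{rcg} but is immediate from the modular condition directly. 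Everything after that is the overlap bound plus the telescoping of normalisations.
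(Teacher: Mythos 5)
Your proof is correct and follows essentially the same route as the paper's: the observation that each untrusted tuple $(a_2,\dots,a_n)$ admits exactly one contributing $a_1$, factoring out the bound $\Tr{[N^{a_1}\rho_{\lambda_1}\otimes\rho_{\lambda_n}]}\le\tfrac{1}{2}$, and collapsing the remaining sums by normalisation, with the paper working directly in the TSOHS model exactly as you note at the end. Your explicit Bloch-vector verification of the overlap bound and the deterministic saturating strategy are small additions the paper leaves implicit, but the argument is the same.
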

\begin{proof}
    From the condition \eqref{witness-na1}, we have that the terms in the witness $W_n$ must satisfy $a_1 - a_2 - \sum_{i=3}^n (-1)^{\sum_{k=2}^{i-1}a_k}a_i\ \mathrm{mod}\ 4=0$. Observe that for every $\{a\}_{n-1/1}$, there exists a unique solution $a_1$, which implies that all combinations of $n-1$ strings $\{a\}_{n-1/1}$ appear exactly once in the summation. Thus, for the topologically robust version of the SOHS model \eqref{SOHS-nc}, the witness $W_n$ can be simplified as
    \begin{eqnarray}
         W_n \leq \max{\left[ \Tr{[N^{a_1}\rho_{\lambda_1}\otimes\rho_{\lambda_n}]} \right]} \sum_{\{a\}_{n-1/1}} \sum_{\lambda_1,\lambda_n} p(\lambda_1,\lambda_n) p(\{a\}_{n-1/1}|\lambda_n,\lambda_1) 
    \end{eqnarray}
    Using the definition of probabilities, $\sum_{\{a\}_{n-1/1}} p(\{a\}_{n-1/1}|\lambda_{1},\lambda_n) = 1$, for any $\lambda_1,\lambda_n$, we finally the obtain the bound as,
    \begin{eqnarray}
        W_n \leq \max{\left[ \Tr{[N^{a_1}\rho_{\lambda_1}\otimes\rho_{\lambda_n}]} \right]} = \frac{1}{2}.
    \end{eqnarray}
    This concludes the proof and also bounds the value of witness for SOHS models in n-ring networks.
\end{proof}

\begin{fakt}
    The maximal quantum value of the witness $W_n$ in the $n-$ring network with one trusted party is $1$.
\end{fakt}
\begin{proof}
For this proof we will use the following notation $\ket{\phi_+} \rightarrow \ket{0}$, $\ket{\psi_+} \rightarrow \ket{1}$, $\ket{\phi_-} \rightarrow \ket{2}$ and $\ket{\psi_-} \rightarrow \ket{3}$. The $n$ party state is then written as
\begin{eqnarray}\label{state-n}
    \ket{\psi_n} = \prod_{i}^{n}\ket{0}_{A_i^2,A_{i+1}^1} 
\end{eqnarray}
One can easily check the following properties of bell states given below.
\begin{eqnarray}\label{bell_swap}
    \ket{0}_{1,2}\ket{0}_{3,4} = \frac{\left(\ket{0}\ket{0} + \ket{1}\ket{1} + \ket{2}\ket{2} + \ket{3}\ket{3}\right)_{1,4,2,3}}{2} \nonumber \\ 
    \ket{0}_{1,2}\ket{1}_{3,4} = \frac{\left(\ket{1}\ket{0} + \ket{0}\ket{1} + \ket{2}\ket{3} + \ket{3}\ket{2}\right)_{1,4,2,3}}{2} \nonumber \\ 
    \ket{0}_{1,2}\ket{2}_{3,4} = \frac{\left(\ket{2}\ket{0} + \ket{0}\ket{2} - \ket{1}\ket{3} - \ket{3}\ket{1}\right)_{1,4,2,3}}{2} \nonumber \\ 
    \ket{0}_{1,2}\ket{3}_{3,4} = \frac{\left(\ket{3}\ket{0} - \ket{0}\ket{3} + \ket{1}\ket{2} - \ket{2}\ket{1}\right)_{1,4,2,3}}{2}
\end{eqnarray}
These equations are the key component of the proof. Notice how the state of the second subsystem determines the form of the permuted state in the same manner as the reverse coarse-graining scheme used to construct the witness. We will now prove that $p(\{a\}_n) = c_{\{a\}_n}/4^{n-1}$ for the $n$ partite state \eqref{state-n}. Using \eqref{bell_swap} we have,
\begin{eqnarray}
    \ket{\psi_n} = \left(\prod_{i}^{n-2}\ket{0}_{A_i^2,A_{i+1}^1}\right) \ket{0}_{A_{n-1}^2,A_{n}^1}\ket{0}_{A_n^2,A_{1}^1} = \left(\prod_{i}^{n-2} \ket{0}_{A_i^2,A_{i+1}^1} \right) \frac{1}{2} \left( \ket{0}\ket{0} + \ket{1}\ket{1} + \ket{2}\ket{2} + \ket{3}\ket{3} \right)_{A_{n-1}^2,A_{1}^1,A_n^1,A_{n}^2}. \quad
\end{eqnarray}
Suppose $n = 2$. The product on the left reduces to identity and the sum of states on the right produce the desired correlation $p(\{a\}_2) = c_{\{a\}_2}/4$. Repeating this step, we have
\begin{eqnarray}
    \ket{\psi_n} = \left(\prod_{i}^{n-3} \ket{0}_{A_i^2,A_{i+1}^1} \right) \frac{1}{4} \left( \ket{0}\ket{0}\ket{0} + \ket{0}\ket{1}\ket{1} + \ket{0}\ket{2}\ket{2} + \ket{0}\ket{3}\ket{3} + \ket{1}\ket{1}\ket{0} + \ket{1}\ket{0}\ket{1} + \ket{1}\ket{2}\ket{3} + \ket{1}\ket{3}\ket{2}+ \right. \nonumber \\ \left \ket{2}\ket{2}\ket{0} + \ket{2}\ket{0}\ket{2} - \ket{2}\ket{1}\ket{3} - \ket{2}\ket{3}\ket{1} + \ket{3}\ket{3}\ket{0} - \ket{3}\ket{0}\ket{3} + \ket{3}\ket{1}\ket{2} - \ket{3}\ket{2}\ket{1} \right)_{A_{n-2}^2,A_{1}^1,A_{n-1}^1,A_{n-1}^2,A_n^1,A_{n}^2}. \qquad
\end{eqnarray}
Again, suppose that $n = 3$. For the case of the triangle network, the product of the states on the left reduces to identity and the sum of states on the right reproduces the desired behavior $p(\{a\}_3) = c_{\{a\}_3}/16$. Further applications of the basis swap on the distributed state will respect the reverse coarse-graining scheme, which by definition, will produce terms of the n-ring witness. Each term of the n-ring witness will appear in the sum with equal probability $1/4^{n-1}$ (easily checked for the 2 and 3-party witness). This concludes the proof. 
\end{proof}

\subsection{Necessary conditions for swap steering}

\begin{fakt}
      If any one of the states generated by the sources or the measurement of any untrusted node is separable, then the maximum value that can be obtained of the swap-steering functional $W_n$ \eqref{witness-na1} is $\frac{1}{2}$.
\end{fakt}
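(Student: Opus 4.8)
The plan is to reduce the statement to the structural mechanism already isolated in the triangle case (Fact~\ref{fact3}): a single separable link anywhere along the ring forces the conditional state held by the trusted node to factorise across its two subsystems, after which the bound follows exactly as in the proof of Fact~\ref{fact5}.

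First I would fix an arbitrary quantum strategy and write, since $W_n$ is linear and each source may therefore be taken to emit a fixed (possibly product) state $\rho_i$, the unnormalised conditional state of the trusted node as
\begin{equation}
\sigma_{A_1^1A_1^2}^{(a_2,\dots,a_n)}=\mathrm{Tr}_{A_2\cdots A_n}\!\Big[\Big(\bigotimes_{i=2}^{n}M_{A_i}^{a_i}\Big)\,(\rho_1\otimes\cdots\otimes\rho_n)\Big].
\end{equation}
The key observation is topological: the two ports $A_1^1$ and $A_1^2$ of the trusted node are joined, through the sources and the intermediate nodes, by the single path
\begin{equation}
A_1^1 \,-\, S_n \,-\, A_n \,-\, S_{n-1} \,-\, \cdots \,-\, A_2 \,-\, S_1 \,-\, A_1^2 .
\end{equation}
If some source $S_m$ emits a product state $\sigma_{A_m^2}\otimes\tau_{A_{m+1}^1}$, or if some untrusted node $A_m$ measures a basis all of whose elements factorise as $M_{A_m^1}^{a_m^1}\otimes M_{A_m^2}^{a_m^2}$, then this path is severed at that link, so that tracing out the two resulting half-chains decouples $A_1^1$ from $A_1^2$: one obtains $\sigma_{A_1^1A_1^2}^{(a_2,\dots,a_n)}=\varrho_{A_1^1}^{(\,\cdot\,)}\otimes\varrho_{A_1^2}^{(\,\cdot\,)}$ for some subnormalised single-qubit operators. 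This is precisely the manipulation carried out for the triangle in Fact~\ref{fact3}; for a break at an arbitrary position I would phrase it as a short induction on the lengths of the two half-chains, with those triangle computations as the base cases.

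It then remains to insert this into the witness. Using the coefficient rule \eqref{witness-na} together with the fact, already exploited for Fact~\ref{fact5}, that for each string $(a_2,\dots,a_n)$ there is a unique $a_1=a_1(a_2,\dots,a_n)$ with $c_{\{a\}_n}=1$, one has
\begin{equation}
W_n=\sum_{a_2,\dots,a_n}\mathrm{Tr}\big[N^{a_1(a_2,\dots,a_n)}\,\sigma_{A_1^1A_1^2}^{(a_2,\dots,a_n)}\big]\le\Big(\max_{a,\,\varrho\otimes\varrho'}\mathrm{Tr}\big[N^{a}\,\varrho\otimes\varrho'\big]\Big)\sum_{a_2,\dots,a_n}\mathrm{Tr}\big[\sigma_{A_1^1A_1^2}^{(a_2,\dots,a_n)}\big].
\end{equation}
The remaining sum of traces is the total probability $1$, and the projection of any two-qubit product state onto a Bell-basis element is at most $\tfrac12$ (Fact~\ref{fact1}), so $W_n\le\tfrac12$, which coincides with the SOHS value of Fact~\ref{fact5} and is moreover attained (for instance by replacing one node's Bell-basis measurement in the maximal-violation strategy with the computational basis, which retains value $\tfrac12$). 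I expect the only genuine work to be the bookkeeping in the middle paragraph: making the ``a cut disconnects the $A_1^1$--$A_1^2$ path'' argument uniform over all positions of the separable link, since the partial traces regroup differently according to whether the severed link is a source or a node and where it lies in the ring; once that is settled, the rest is immediate from Facts~\ref{fact1} and~\ref{fact5}.
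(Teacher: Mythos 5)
Your proposal is correct and follows essentially the same route as the paper's own proof: a separable source or separable measurement element cuts the ring into two half-chains, so the conditional state at the trusted node factorises across $A_1^1:A_1^2$, and then the bound $\Tr[N^{a_1}\varrho\otimes\varrho']\le\tfrac12$ together with the unique-$a_1$ structure of the coefficients \eqref{witness-na} gives $W_n\le\tfrac12$. The only cosmetic difference is that the paper writes out the two partial-trace computations (source cut and node cut) explicitly rather than packaging them as an induction on the half-chain lengths, and it phrases the final step via Bayes' rule, but the content is identical.
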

\begin{proof}
We will prove the above statement by showing that one cannot obtain $n-$ring swap-steering correlations if any of the states or measurements are separable. For this purpose, we assume that any one of the sources $S_i$ distributes a separable quantum state $\rho_{A_{i}^2}\otimes\rho_{A_{i+1}^1}$. Consider now that all the parties except the trusted one obtain the outcomes $a_2,\ldots,a_n$. Consequently, the post-measured state at trusted node $A_1$ upto normalisation is given by
\begin{eqnarray}
    \Tr_{A_2\dots A_n}{\left[M^{a_2}_{A_2}\dots M^{a_n}_{A_n}\left(\rho_{A_1^2,A_2^1}\dots\rho_{A_i^2}\otimes\rho_{A_{i+1}^1}\dots\rho_{A_n^2,A_1^1}\right)\right]} = \Tr_{A_2\dots A_i}{\left[M^{a_2}_{A_2}\dots M^{a_i}_{A_i}\left(\rho_{A_1^2,A_2^1}\dots\rho_{A_{i-1}^2,A_i^1}\rho_{A_i^2}\right)\right]} \otimes \nonumber \\ \Tr_{A_{i+1}\dots A_n}{\left[M^{a_{i+1}}_{A_{i+1}}\dots M^{a_n}_{A_n}\left(\rho_{A_{i+1}^1}\rho_{A_{i+1}^2,A_{i+2}^1}\dots\rho_{A_{n}^2,A_1^1} \right)\right]} = \rho_{A_1^1}^{(a_{i+1},\dots ,a_n)}\otimes\rho_{A_1^2}^{(a_2,\dots,a_i)} \qquad\qquad\qquad\qquad
\end{eqnarray}
Similarly, let us consider one of the nodes $A_i$ to perform measurements such that for any $a_i$, $M_{A_i}^{a_i}\equiv M^{a_i}_{A_i^1} \otimes M^{a_i}_{A_i^2}$. State at trusted node $a_1$ post measurement at all other nodes is given by the unormalised state 
\begin{eqnarray}
    \Tr_{A_2\dots A_n}{\left[M^{a_2}_{A_2^1,A_2^2}\dots M^{a_i}_{A_i^1} \otimes M^{a_i}_{A_i^2} \dots M^{a_n}_{A_n^1,A_n^2}\left(\rho_{A_1^2,A_2^1}\dots\rho_{A_n^2,A_1^1}\right)\right]} = \Tr_{A_2\dots A_i^1}{\left[M^{a_2}_{A_2^1,A_2^2}\dots M^{a_i}_{A_i^1}\left(\rho_{A_1^2,A_2^1}\dots\rho_{A_{i-1}^2,A_i^1}\right)\right]} \otimes \nonumber \\ \Tr_{A_{i}^2\dots A_n}{\left[M^{a_{i}}_{A_{i}^2}\dots M^{a_n}_{A_n^1,A_n^1}\left(\rho_{A_i^2,A_{i+1}^1}\dots\rho_{A_{n}^2,A_1^1} \right)\right]} = \rho_{A_1^1}^{(a_{i}\dots a_n)}\otimes\rho_{A_1^2}^{(a_2 \dots a_i)} \qquad\qquad\qquad\qquad\qquad\quad
\end{eqnarray}
In both cases, $p(a_1|\{\rho\},a_2,a_3,\ldots,a_n) = \Tr{\left[N^{a_1}_{A_1^1,A_1^2} \rho_{A_1^1}\otimes\rho_{A_1^2}\right]} \leq 1/2$ for the Bell state projectors $N^{a_1}$ and set of quantum states distributed by the sources $\{\rho\}$. Using Baye's rule we have, $p(\{a\}_n|\{\rho\}) = p(a_1|\{\rho\},a_2,a_3,\ldots,a_n)p(a_2,a_3,\ldots,a_n|\{\rho\})$. Finally, we evaluate the witness $W_n$ as 
\begin{eqnarray}
    W_n \leq \max{\left[p(a_1|\{\rho\},a_2,a_3,\ldots,a_n)\right]} \sum_{a_2,a_3,\ldots,a_n} p(a_2,a_3,\ldots,a_n|\{\rho\}) = \frac{1}{2} \quad
\end{eqnarray}
Thus, if any state or measurement is separable then the value of the witness is bounded by $1/2$. By definition, these are not swap-steering correlations. This concludes the proof. 
\end{proof}

\subsection{Every bipartite entangled state is swap-steerable}

Consider an entangled state $\tilde{\rho}\in\mathbb{C}^{d_{A_1^1}}\otimes\mathbb{C}^{d_{A_1^2}}$ such that the corresponding entanglement witness is given by $W_{\tilde{\rho}}$ which can be expressed using the tomographically complete set of operators stated as $W_{\tilde{\rho}}=\sum_{s,t}\beta_{\tilde{\rho},s,t}\tau_s\otimes\omega_t$. Utilising this entanglement witness, the $n-$ring swap-steering inequalities is given by
\begin{eqnarray}\label{Wituniv}
    \mathcal{S}_{\tilde{\rho}}=-\sum_{s,t}\beta_{\tilde{\rho},s,t}p(0,0,\ldots,0|st)
\end{eqnarray}
where $p(0,0,\ldots,0|st)$ is the probability of obtaining outcome $0$ by all the parties given the input $s,t$ of the trusted party $A_1$.
\setcounter{fakt}{3}
\begin{fakt}
    Consider the $n-$ring swap-steering scenario  and the functional $\mathcal{S}_{\tilde{\rho}}$ \eqref{Wituniv}. The maximal value attainable of $\mathcal{S}_{\tilde{\rho}}$ using an SOHS model is $\beta_{SOHS}=0$. 
\end{fakt}
\begin{proof}
    Let us recall that for correlations admitting a SOHS model, we have that
    \begin{equation}
        p(\{a\}|st)= \sum_{\{\lambda\}}\prod_{j=1}^{n} p(\lambda_{j})\Tr{[N^{a_1}_{A_1}\rho_{\lambda_1}\otimes\rho_{\lambda_n}]}\prod_{i=2}^{n} p(a_{i}|\{\lambda_{i}\})
    \end{equation}
for all $\{a\}$. Consequently, we have from \eqref{Wituniv} that
\begin{eqnarray}\label{A21}
     \mathcal{S}_{\tilde{\rho}}&=&\sum_{\{\lambda\}}\prod_{j=1}^{n} p(\lambda_{j})\Gamma(\rho_{\lambda_1}\otimes\rho_{\lambda_n})\prod_{i=2}^{n} p(0,\ldots,0|\{\lambda_{i}\})\ \ 
\end{eqnarray}
where 
\begin{eqnarray}\label{gamma11}
    \Gamma(\rho_{\lambda_1}\otimes\rho_{\lambda_n})=-\sum_{s,t}\beta_{\tilde{\rho},s,t}p(0|st,\rho_{\lambda_1}\otimes\rho_{\lambda_n}).
\end{eqnarray}
Expanding the above formula \eqref{gamma11}, by recalling Alice's measurements $\mathcal{A}_{st}$, we obtain
\begin{eqnarray}
    \Gamma(\rho_{\lambda_1}\otimes\rho_{\lambda_n})&=& -\sum_{s,t}\beta_{\tilde{\rho},s,t}\Tr(\tau_s\otimes\omega_t\rho_{\lambda_1}\otimes\rho_{\lambda_n}) \nonumber\\&=&-\Tr(W_{\tilde{\rho}}\rho_{\lambda_1}\otimes\rho_{\lambda_n})\leq 0.
\end{eqnarray}
Thus, we have from \eqref{A21} that for correlations admitting a SOHS model
\begin{eqnarray}
     \mathcal{S}_{\tilde{\rho}}\leq0.
\end{eqnarray}
This concludes the proof.
\end{proof}


\begin{thebibliography}{49}%
\makeatletter
\providecommand \@ifxundefined [1]{%
 \@ifx{#1\undefined}
}%
\providecommand \@ifnum [1]{%
 \ifnum #1\expandafter \@firstoftwo
 \else \expandafter \@secondoftwo
 \fi
}%
\providecommand \@ifx [1]{%
 \ifx #1\expandafter \@firstoftwo
 \else \expandafter \@secondoftwo
 \fi
}%
\providecommand \natexlab [1]{#1}%
\providecommand \enquote  [1]{``#1''}%
\providecommand \bibnamefont  [1]{#1}%
\providecommand \bibfnamefont [1]{#1}%
\providecommand \citenamefont [1]{#1}%
\providecommand \href@noop [0]{\@secondoftwo}%
\providecommand \href [0]{\begingroup \@sanitize@url \@href}%
\providecommand \@href[1]{\@@startlink{#1}\@@href}%
\providecommand \@@href[1]{\endgroup#1\@@endlink}%
\providecommand \@sanitize@url [0]{\catcode `\\12\catcode `\$12\catcode `\&12\catcode `\#12\catcode `\^12\catcode `\_12\catcode `\%12\relax}%
\providecommand \@@startlink[1]{}%
\providecommand \@@endlink[0]{}%
\providecommand \url  [0]{\begingroup\@sanitize@url \@url }%
\providecommand \@url [1]{\endgroup\@href {#1}{\urlprefix }}%
\providecommand \urlprefix  [0]{URL }%
\providecommand \Eprint [0]{\href }%
\providecommand \doibase [0]{https://doi.org/}%
\providecommand \selectlanguage [0]{\@gobble}%
\providecommand \bibinfo  [0]{\@secondoftwo}%
\providecommand \bibfield  [0]{\@secondoftwo}%
\providecommand \translation [1]{[#1]}%
\providecommand \BibitemOpen [0]{}%
\providecommand \bibitemStop [0]{}%
\providecommand \bibitemNoStop [0]{.\EOS\space}%
\providecommand \EOS [0]{\spacefactor3000\relax}%
\providecommand \BibitemShut  [1]{\csname bibitem#1\endcsname}%
\let\auto@bib@innerbib\@empty
\bibitem [{\citenamefont {Einstein}\ \emph {et~al.}(1935)\citenamefont {Einstein}, \citenamefont {Podolsky},\ and\ \citenamefont {Rosen}}]{EPR}%
  \BibitemOpen
  \bibfield  {author} {\bibinfo {author} {\bibfnamefont {A.}~\bibnamefont {Einstein}}, \bibinfo {author} {\bibfnamefont {B.}~\bibnamefont {Podolsky}},\ and\ \bibinfo {author} {\bibfnamefont {N.}~\bibnamefont {Rosen}},\ }\bibfield  {title} {\bibinfo {title} {Can quantum-mechanical description of physical reality be considered complete?},\ }\href {https://doi.org/10.1103/PhysRev.47.777} {\bibfield  {journal} {\bibinfo  {journal} {Phys. Rev.}\ }\textbf {\bibinfo {volume} {47}},\ \bibinfo {pages} {777} (\bibinfo {year} {1935})}\BibitemShut {NoStop}%
\bibitem [{\citenamefont {Bell}(1964)}]{Bell}%
  \BibitemOpen
  \bibfield  {author} {\bibinfo {author} {\bibfnamefont {J.~S.}\ \bibnamefont {Bell}},\ }\bibfield  {title} {\bibinfo {title} {On the einstein podolsky rosen paradox},\ }\href {https://doi.org/10.1103/PhysicsPhysiqueFizika.1.195} {\bibfield  {journal} {\bibinfo  {journal} {Physics Physique Fizika}\ }\textbf {\bibinfo {volume} {1}},\ \bibinfo {pages} {195} (\bibinfo {year} {1964})}\BibitemShut {NoStop}%
\bibitem [{\citenamefont {Bell}(1966)}]{Bell66}%
  \BibitemOpen
  \bibfield  {author} {\bibinfo {author} {\bibfnamefont {J.~S.}\ \bibnamefont {Bell}},\ }\bibfield  {title} {\bibinfo {title} {On the problem of hidden variables in quantum mechanics},\ }\href {https://doi.org/10.1103/RevModPhys.38.447} {\bibfield  {journal} {\bibinfo  {journal} {Rev. Mod. Phys.}\ }\textbf {\bibinfo {volume} {38}},\ \bibinfo {pages} {447} (\bibinfo {year} {1966})}\BibitemShut {NoStop}%
\bibitem [{\citenamefont {Aspect}\ \emph {et~al.}(1982)\citenamefont {Aspect}, \citenamefont {Dalibard},\ and\ \citenamefont {Roger}}]{Bellexp1}%
  \BibitemOpen
  \bibfield  {author} {\bibinfo {author} {\bibfnamefont {A.}~\bibnamefont {Aspect}}, \bibinfo {author} {\bibfnamefont {J.}~\bibnamefont {Dalibard}},\ and\ \bibinfo {author} {\bibfnamefont {G.}~\bibnamefont {Roger}},\ }\bibfield  {title} {\bibinfo {title} {Experimental test of bell's inequalities using time-varying analyzers},\ }\href {https://doi.org/10.1103/PhysRevLett.49.1804} {\bibfield  {journal} {\bibinfo  {journal} {Phys. Rev. Lett.}\ }\textbf {\bibinfo {volume} {49}},\ \bibinfo {pages} {1804} (\bibinfo {year} {1982})}\BibitemShut {NoStop}%
\bibitem [{\citenamefont {Aspect}\ \emph {et~al.}(1981)\citenamefont {Aspect}, \citenamefont {Grangier},\ and\ \citenamefont {Roger}}]{Bellexp2}%
  \BibitemOpen
  \bibfield  {author} {\bibinfo {author} {\bibfnamefont {A.}~\bibnamefont {Aspect}}, \bibinfo {author} {\bibfnamefont {P.}~\bibnamefont {Grangier}},\ and\ \bibinfo {author} {\bibfnamefont {G.}~\bibnamefont {Roger}},\ }\bibfield  {title} {\bibinfo {title} {Experimental tests of realistic local theories via bell's theorem},\ }\href {https://doi.org/10.1103/PhysRevLett.47.460} {\bibfield  {journal} {\bibinfo  {journal} {Phys. Rev. Lett.}\ }\textbf {\bibinfo {volume} {47}},\ \bibinfo {pages} {460} (\bibinfo {year} {1981})}\BibitemShut {NoStop}%
\bibitem [{\citenamefont {Giustina}\ \emph {et~al.}(2015)\citenamefont {Giustina}, \citenamefont {Versteegh}, \citenamefont {Wengerowsky}, \citenamefont {Handsteiner}, \citenamefont {Hochrainer}, \citenamefont {Phelan}, \citenamefont {Steinlechner}, \citenamefont {Kofler}, \citenamefont {Larsson}, \citenamefont {Abell\'an}, \citenamefont {Amaya}, \citenamefont {Pruneri}, \citenamefont {Mitchell}, \citenamefont {Beyer}, \citenamefont {Gerrits}, \citenamefont {Lita}, \citenamefont {Shalm}, \citenamefont {Nam}, \citenamefont {Scheidl}, \citenamefont {Ursin}, \citenamefont {Wittmann},\ and\ \citenamefont {Zeilinger}}]{Bellexp3}%
  \BibitemOpen
  \bibfield  {author} {\bibinfo {author} {\bibfnamefont {M.}~\bibnamefont {Giustina}}, \bibinfo {author} {\bibfnamefont {M.~A.~M.}\ \bibnamefont {Versteegh}}, \bibinfo {author} {\bibfnamefont {S.}~\bibnamefont {Wengerowsky}}, \bibinfo {author} {\bibfnamefont {J.}~\bibnamefont {Handsteiner}}, \bibinfo {author} {\bibfnamefont {A.}~\bibnamefont {Hochrainer}}, \bibinfo {author} {\bibfnamefont {K.}~\bibnamefont {Phelan}}, \bibinfo {author} {\bibfnamefont {F.}~\bibnamefont {Steinlechner}}, \bibinfo {author} {\bibfnamefont {J.}~\bibnamefont {Kofler}}, \bibinfo {author} {\bibfnamefont {J.-A.}\ \bibnamefont {Larsson}}, \bibinfo {author} {\bibfnamefont {C.}~\bibnamefont {Abell\'an}}, \bibinfo {author} {\bibfnamefont {W.}~\bibnamefont {Amaya}}, \bibinfo {author} {\bibfnamefont {V.}~\bibnamefont {Pruneri}}, \bibinfo {author} {\bibfnamefont {M.~W.}\ \bibnamefont {Mitchell}}, \bibinfo {author} {\bibfnamefont {J.}~\bibnamefont {Beyer}}, \bibinfo {author} {\bibfnamefont {T.}~\bibnamefont {Gerrits}}, \bibinfo {author}
  {\bibfnamefont {A.~E.}\ \bibnamefont {Lita}}, \bibinfo {author} {\bibfnamefont {L.~K.}\ \bibnamefont {Shalm}}, \bibinfo {author} {\bibfnamefont {S.~W.}\ \bibnamefont {Nam}}, \bibinfo {author} {\bibfnamefont {T.}~\bibnamefont {Scheidl}}, \bibinfo {author} {\bibfnamefont {R.}~\bibnamefont {Ursin}}, \bibinfo {author} {\bibfnamefont {B.}~\bibnamefont {Wittmann}},\ and\ \bibinfo {author} {\bibfnamefont {A.}~\bibnamefont {Zeilinger}},\ }\bibfield  {title} {\bibinfo {title} {Significant-loophole-free test of bell's theorem with entangled photons},\ }\href {https://doi.org/10.1103/PhysRevLett.115.250401} {\bibfield  {journal} {\bibinfo  {journal} {Phys. Rev. Lett.}\ }\textbf {\bibinfo {volume} {115}},\ \bibinfo {pages} {250401} (\bibinfo {year} {2015})}\BibitemShut {NoStop}%
\bibitem [{\citenamefont {Shalm}\ \emph {et~al.}(2015)\citenamefont {Shalm}, \citenamefont {Meyer-Scott}, \citenamefont {Christensen}, \citenamefont {Bierhorst}, \citenamefont {Wayne}, \citenamefont {Stevens}, \citenamefont {Gerrits}, \citenamefont {Glancy}, \citenamefont {Hamel}, \citenamefont {Allman}, \citenamefont {Coakley}, \citenamefont {Dyer}, \citenamefont {Hodge}, \citenamefont {Lita}, \citenamefont {Verma}, \citenamefont {Lambrocco}, \citenamefont {Tortorici}, \citenamefont {Migdall}, \citenamefont {Zhang}, \citenamefont {Kumor}, \citenamefont {Farr}, \citenamefont {Marsili}, \citenamefont {Shaw}, \citenamefont {Stern}, \citenamefont {Abell\'an}, \citenamefont {Amaya}, \citenamefont {Pruneri}, \citenamefont {Jennewein}, \citenamefont {Mitchell}, \citenamefont {Kwiat}, \citenamefont {Bienfang}, \citenamefont {Mirin}, \citenamefont {Knill},\ and\ \citenamefont {Nam}}]{Bellexp4}%
  \BibitemOpen
  \bibfield  {author} {\bibinfo {author} {\bibfnamefont {L.~K.}\ \bibnamefont {Shalm}}, \bibinfo {author} {\bibfnamefont {E.}~\bibnamefont {Meyer-Scott}}, \bibinfo {author} {\bibfnamefont {B.~G.}\ \bibnamefont {Christensen}}, \bibinfo {author} {\bibfnamefont {P.}~\bibnamefont {Bierhorst}}, \bibinfo {author} {\bibfnamefont {M.~A.}\ \bibnamefont {Wayne}}, \bibinfo {author} {\bibfnamefont {M.~J.}\ \bibnamefont {Stevens}}, \bibinfo {author} {\bibfnamefont {T.}~\bibnamefont {Gerrits}}, \bibinfo {author} {\bibfnamefont {S.}~\bibnamefont {Glancy}}, \bibinfo {author} {\bibfnamefont {D.~R.}\ \bibnamefont {Hamel}}, \bibinfo {author} {\bibfnamefont {M.~S.}\ \bibnamefont {Allman}}, \bibinfo {author} {\bibfnamefont {K.~J.}\ \bibnamefont {Coakley}}, \bibinfo {author} {\bibfnamefont {S.~D.}\ \bibnamefont {Dyer}}, \bibinfo {author} {\bibfnamefont {C.}~\bibnamefont {Hodge}}, \bibinfo {author} {\bibfnamefont {A.~E.}\ \bibnamefont {Lita}}, \bibinfo {author} {\bibfnamefont {V.~B.}\ \bibnamefont {Verma}}, \bibinfo {author}
  {\bibfnamefont {C.}~\bibnamefont {Lambrocco}}, \bibinfo {author} {\bibfnamefont {E.}~\bibnamefont {Tortorici}}, \bibinfo {author} {\bibfnamefont {A.~L.}\ \bibnamefont {Migdall}}, \bibinfo {author} {\bibfnamefont {Y.}~\bibnamefont {Zhang}}, \bibinfo {author} {\bibfnamefont {D.~R.}\ \bibnamefont {Kumor}}, \bibinfo {author} {\bibfnamefont {W.~H.}\ \bibnamefont {Farr}}, \bibinfo {author} {\bibfnamefont {F.}~\bibnamefont {Marsili}}, \bibinfo {author} {\bibfnamefont {M.~D.}\ \bibnamefont {Shaw}}, \bibinfo {author} {\bibfnamefont {J.~A.}\ \bibnamefont {Stern}}, \bibinfo {author} {\bibfnamefont {C.}~\bibnamefont {Abell\'an}}, \bibinfo {author} {\bibfnamefont {W.}~\bibnamefont {Amaya}}, \bibinfo {author} {\bibfnamefont {V.}~\bibnamefont {Pruneri}}, \bibinfo {author} {\bibfnamefont {T.}~\bibnamefont {Jennewein}}, \bibinfo {author} {\bibfnamefont {M.~W.}\ \bibnamefont {Mitchell}}, \bibinfo {author} {\bibfnamefont {P.~G.}\ \bibnamefont {Kwiat}}, \bibinfo {author} {\bibfnamefont {J.~C.}\ \bibnamefont {Bienfang}},
  \bibinfo {author} {\bibfnamefont {R.~P.}\ \bibnamefont {Mirin}}, \bibinfo {author} {\bibfnamefont {E.}~\bibnamefont {Knill}},\ and\ \bibinfo {author} {\bibfnamefont {S.~W.}\ \bibnamefont {Nam}},\ }\bibfield  {title} {\bibinfo {title} {Strong loophole-free test of local realism},\ }\href {https://doi.org/10.1103/PhysRevLett.115.250402} {\bibfield  {journal} {\bibinfo  {journal} {Phys. Rev. Lett.}\ }\textbf {\bibinfo {volume} {115}},\ \bibinfo {pages} {250402} (\bibinfo {year} {2015})}\BibitemShut {NoStop}%
\bibitem [{\citenamefont {Brunner}\ \emph {et~al.}(2014)\citenamefont {Brunner}, \citenamefont {Cavalcanti}, \citenamefont {Pironio}, \citenamefont {Scarani},\ and\ \citenamefont {Wehner}}]{NonlocalityReview}%
  \BibitemOpen
  \bibfield  {author} {\bibinfo {author} {\bibfnamefont {N.}~\bibnamefont {Brunner}}, \bibinfo {author} {\bibfnamefont {D.}~\bibnamefont {Cavalcanti}}, \bibinfo {author} {\bibfnamefont {S.}~\bibnamefont {Pironio}}, \bibinfo {author} {\bibfnamefont {V.}~\bibnamefont {Scarani}},\ and\ \bibinfo {author} {\bibfnamefont {S.}~\bibnamefont {Wehner}},\ }\bibfield  {title} {\bibinfo {title} {Bell nonlocality},\ }\href {https://doi.org/10.1103/RevModPhys.86.419} {\bibfield  {journal} {\bibinfo  {journal} {Rev. Mod. Phys.}\ }\textbf {\bibinfo {volume} {86}},\ \bibinfo {pages} {419} (\bibinfo {year} {2014})}\BibitemShut {NoStop}%
\bibitem [{\citenamefont {Schrödinger}(1935)}]{Schrod}%
  \BibitemOpen
  \bibfield  {author} {\bibinfo {author} {\bibfnamefont {E.}~\bibnamefont {Schrödinger}},\ }\bibfield  {title} {\bibinfo {title} {Discussion of probability relations between separated systems},\ }\href {https://doi.org/10.1017/S0305004100013554} {\bibfield  {journal} {\bibinfo  {journal} {Mathematical Proceedings of the Cambridge Philosophical Society}\ }\textbf {\bibinfo {volume} {31}},\ \bibinfo {pages} {555–563} (\bibinfo {year} {1935})}\BibitemShut {NoStop}%
\bibitem [{\citenamefont {Wiseman}\ \emph {et~al.}(2007)\citenamefont {Wiseman}, \citenamefont {Jones},\ and\ \citenamefont {Doherty}}]{Wiseman}%
  \BibitemOpen
  \bibfield  {author} {\bibinfo {author} {\bibfnamefont {H.~M.}\ \bibnamefont {Wiseman}}, \bibinfo {author} {\bibfnamefont {S.~J.}\ \bibnamefont {Jones}},\ and\ \bibinfo {author} {\bibfnamefont {A.~C.}\ \bibnamefont {Doherty}},\ }\bibfield  {title} {\bibinfo {title} {Steering, entanglement, nonlocality, and the einstein-podolsky-rosen paradox},\ }\href {https://doi.org/10.1103/PhysRevLett.98.140402} {\bibfield  {journal} {\bibinfo  {journal} {Phys. Rev. Lett.}\ }\textbf {\bibinfo {volume} {98}},\ \bibinfo {pages} {140402} (\bibinfo {year} {2007})}\BibitemShut {NoStop}%
\bibitem [{\citenamefont {Branciard}\ \emph {et~al.}(2010)\citenamefont {Branciard}, \citenamefont {Gisin},\ and\ \citenamefont {Pironio}}]{pironio1}%
  \BibitemOpen
  \bibfield  {author} {\bibinfo {author} {\bibfnamefont {C.}~\bibnamefont {Branciard}}, \bibinfo {author} {\bibfnamefont {N.}~\bibnamefont {Gisin}},\ and\ \bibinfo {author} {\bibfnamefont {S.}~\bibnamefont {Pironio}},\ }\bibfield  {title} {\bibinfo {title} {Characterizing the nonlocal correlations created via entanglement swapping},\ }\href {https://doi.org/10.1103/PhysRevLett.104.170401} {\bibfield  {journal} {\bibinfo  {journal} {Phys. Rev. Lett.}\ }\textbf {\bibinfo {volume} {104}},\ \bibinfo {pages} {170401} (\bibinfo {year} {2010})}\BibitemShut {NoStop}%
\bibitem [{\citenamefont {Branciard}\ \emph {et~al.}(2012)\citenamefont {Branciard}, \citenamefont {Rosset}, \citenamefont {Gisin},\ and\ \citenamefont {Pironio}}]{Pironio22}%
  \BibitemOpen
  \bibfield  {author} {\bibinfo {author} {\bibfnamefont {C.}~\bibnamefont {Branciard}}, \bibinfo {author} {\bibfnamefont {D.}~\bibnamefont {Rosset}}, \bibinfo {author} {\bibfnamefont {N.}~\bibnamefont {Gisin}},\ and\ \bibinfo {author} {\bibfnamefont {S.}~\bibnamefont {Pironio}},\ }\bibfield  {title} {\bibinfo {title} {Bilocal versus nonbilocal correlations in entanglement-swapping experiments},\ }\href {https://doi.org/10.1103/PhysRevA.85.032119} {\bibfield  {journal} {\bibinfo  {journal} {Phys. Rev. A}\ }\textbf {\bibinfo {volume} {85}},\ \bibinfo {pages} {032119} (\bibinfo {year} {2012})}\BibitemShut {NoStop}%
\bibitem [{\citenamefont {Pironio}\ and\ \citenamefont {Massar}(2013)}]{pironio2}%
  \BibitemOpen
  \bibfield  {author} {\bibinfo {author} {\bibfnamefont {S.}~\bibnamefont {Pironio}}\ and\ \bibinfo {author} {\bibfnamefont {S.}~\bibnamefont {Massar}},\ }\bibfield  {title} {\bibinfo {title} {Security of practical private randomness generation},\ }\href {https://doi.org/10.1103/PhysRevA.87.012336} {\bibfield  {journal} {\bibinfo  {journal} {Phys. Rev. A}\ }\textbf {\bibinfo {volume} {87}},\ \bibinfo {pages} {012336} (\bibinfo {year} {2013})}\BibitemShut {NoStop}%
\bibitem [{\citenamefont {Fritz}(2012)}]{Fritz}%
  \BibitemOpen
  \bibfield  {author} {\bibinfo {author} {\bibfnamefont {T.}~\bibnamefont {Fritz}},\ }\bibfield  {title} {\bibinfo {title} {Beyond bell's theorem: correlation scenarios},\ }\href {https://doi.org/10.1088/1367-2630/14/10/103001} {\bibfield  {journal} {\bibinfo  {journal} {New Journal of Physics}\ }\textbf {\bibinfo {volume} {14}},\ \bibinfo {pages} {103001} (\bibinfo {year} {2012})}\BibitemShut {NoStop}%
\bibitem [{\citenamefont {Renou}\ \emph {et~al.}(2019)\citenamefont {Renou}, \citenamefont {B\"aumer}, \citenamefont {Boreiri}, \citenamefont {Brunner}, \citenamefont {Gisin},\ and\ \citenamefont {Beigi}}]{renou1}%
  \BibitemOpen
  \bibfield  {author} {\bibinfo {author} {\bibfnamefont {M.-O.}\ \bibnamefont {Renou}}, \bibinfo {author} {\bibfnamefont {E.}~\bibnamefont {B\"aumer}}, \bibinfo {author} {\bibfnamefont {S.}~\bibnamefont {Boreiri}}, \bibinfo {author} {\bibfnamefont {N.}~\bibnamefont {Brunner}}, \bibinfo {author} {\bibfnamefont {N.}~\bibnamefont {Gisin}},\ and\ \bibinfo {author} {\bibfnamefont {S.}~\bibnamefont {Beigi}},\ }\bibfield  {title} {\bibinfo {title} {Genuine quantum nonlocality in the triangle network},\ }\href {https://doi.org/10.1103/PhysRevLett.123.140401} {\bibfield  {journal} {\bibinfo  {journal} {Phys. Rev. Lett.}\ }\textbf {\bibinfo {volume} {123}},\ \bibinfo {pages} {140401} (\bibinfo {year} {2019})}\BibitemShut {NoStop}%
\bibitem [{\citenamefont {Jones}\ \emph {et~al.}(2021)\citenamefont {Jones}, \citenamefont {\ifmmode \check{S}\else \v{S}\fi{}upi\ifmmode~\acute{c}\else \'{c}\fi{}}, \citenamefont {Uola}, \citenamefont {Brunner},\ and\ \citenamefont {Skrzypczyk}}]{netstee}%
  \BibitemOpen
  \bibfield  {author} {\bibinfo {author} {\bibfnamefont {B.~D.~M.}\ \bibnamefont {Jones}}, \bibinfo {author} {\bibfnamefont {I.}~\bibnamefont {\ifmmode \check{S}\else \v{S}\fi{}upi\ifmmode~\acute{c}\else \'{c}\fi{}}}, \bibinfo {author} {\bibfnamefont {R.}~\bibnamefont {Uola}}, \bibinfo {author} {\bibfnamefont {N.}~\bibnamefont {Brunner}},\ and\ \bibinfo {author} {\bibfnamefont {P.}~\bibnamefont {Skrzypczyk}},\ }\bibfield  {title} {\bibinfo {title} {Network quantum steering},\ }\href {https://doi.org/10.1103/PhysRevLett.127.170405} {\bibfield  {journal} {\bibinfo  {journal} {Phys. Rev. Lett.}\ }\textbf {\bibinfo {volume} {127}},\ \bibinfo {pages} {170405} (\bibinfo {year} {2021})}\BibitemShut {NoStop}%
\bibitem [{\citenamefont {Sarkar}(2024{\natexlab{a}})}]{Sarkar2024networkquantum}%
  \BibitemOpen
  \bibfield  {author} {\bibinfo {author} {\bibfnamefont {S.}~\bibnamefont {Sarkar}},\ }\bibfield  {title} {\bibinfo {title} {Network quantum steering enables randomness certification without seed randomness},\ }\href {https://doi.org/10.22331/q-2024-07-19-1419} {\bibfield  {journal} {\bibinfo  {journal} {{Quantum}}\ }\textbf {\bibinfo {volume} {8}},\ \bibinfo {pages} {1419} (\bibinfo {year} {2024}{\natexlab{a}})}\BibitemShut {NoStop}%
\bibitem [{\citenamefont {Sarkar}(2024{\natexlab{b}})}]{sarkar2024_every}%
  \BibitemOpen
  \bibfield  {author} {\bibinfo {author} {\bibfnamefont {S.}~\bibnamefont {Sarkar}},\ }\href {https://arxiv.org/abs/2406.11994} {\bibinfo {title} {Witnessing network steerability of every bipartite entangled state without inputs}} (\bibinfo {year} {2024}{\natexlab{b}}),\ \Eprint {https://arxiv.org/abs/2406.11994} {arXiv:2406.11994 [quant-ph]} \BibitemShut {NoStop}%
\bibitem [{\citenamefont {Sarkar}(2025)}]{entmeauniv}%
  \BibitemOpen
  \bibfield  {author} {\bibinfo {author} {\bibfnamefont {S.}~\bibnamefont {Sarkar}},\ }\href {https://arxiv.org/abs/2502.06986} {\bibinfo {title} {Detecting entanglement in any measurement using quantum networks}} (\bibinfo {year} {2025}),\ \Eprint {https://arxiv.org/abs/2502.06986} {arXiv:2502.06986 [quant-ph]} \BibitemShut {NoStop}%
\bibitem [{\citenamefont {Boreiri}\ \emph {et~al.}(2024{\natexlab{a}})\citenamefont {Boreiri}, \citenamefont {Krivachy}, \citenamefont {Sekatski}, \citenamefont {Girardin},\ and\ \citenamefont {Brunner}}]{boreiri2024topologicallyrobustquantumnetwork}%
  \BibitemOpen
  \bibfield  {author} {\bibinfo {author} {\bibfnamefont {S.}~\bibnamefont {Boreiri}}, \bibinfo {author} {\bibfnamefont {T.}~\bibnamefont {Krivachy}}, \bibinfo {author} {\bibfnamefont {P.}~\bibnamefont {Sekatski}}, \bibinfo {author} {\bibfnamefont {A.}~\bibnamefont {Girardin}},\ and\ \bibinfo {author} {\bibfnamefont {N.}~\bibnamefont {Brunner}},\ }\href {https://arxiv.org/abs/2406.09510} {\bibinfo {title} {Topologically robust quantum network nonlocality}} (\bibinfo {year} {2024}{\natexlab{a}}),\ \Eprint {https://arxiv.org/abs/2406.09510} {arXiv:2406.09510 [quant-ph]} \BibitemShut {NoStop}%
\bibitem [{\citenamefont {Boreiri}\ \emph {et~al.}(2024{\natexlab{b}})\citenamefont {Boreiri}, \citenamefont {Ulu}, \citenamefont {Brunner},\ and\ \citenamefont {Sekatski}}]{boreiri2024noiserobustproofsquantumnetwork}%
  \BibitemOpen
  \bibfield  {author} {\bibinfo {author} {\bibfnamefont {S.}~\bibnamefont {Boreiri}}, \bibinfo {author} {\bibfnamefont {B.}~\bibnamefont {Ulu}}, \bibinfo {author} {\bibfnamefont {N.}~\bibnamefont {Brunner}},\ and\ \bibinfo {author} {\bibfnamefont {P.}~\bibnamefont {Sekatski}},\ }\href {https://arxiv.org/abs/2311.02182} {\bibinfo {title} {Noise-robust proofs of quantum network nonlocality}} (\bibinfo {year} {2024}{\natexlab{b}}),\ \Eprint {https://arxiv.org/abs/2311.02182} {arXiv:2311.02182 [quant-ph]} \BibitemShut {NoStop}%
\bibitem [{\citenamefont {Kriváchy}\ \emph {et~al.}(2020)\citenamefont {Kriváchy}, \citenamefont {Cai}, \citenamefont {Cavalcanti}, \citenamefont {Tavakoli}, \citenamefont {Gisin},\ and\ \citenamefont {Brunner}}]{Kriv_chy_2020}%
  \BibitemOpen
  \bibfield  {author} {\bibinfo {author} {\bibfnamefont {T.}~\bibnamefont {Kriváchy}}, \bibinfo {author} {\bibfnamefont {Y.}~\bibnamefont {Cai}}, \bibinfo {author} {\bibfnamefont {D.}~\bibnamefont {Cavalcanti}}, \bibinfo {author} {\bibfnamefont {A.}~\bibnamefont {Tavakoli}}, \bibinfo {author} {\bibfnamefont {N.}~\bibnamefont {Gisin}},\ and\ \bibinfo {author} {\bibfnamefont {N.}~\bibnamefont {Brunner}},\ }\bibfield  {title} {\bibinfo {title} {A neural network oracle for quantum nonlocality problems in networks},\ }\bibfield  {journal} {\bibinfo  {journal} {npj Quantum Information}\ }\textbf {\bibinfo {volume} {6}},\ \href {https://doi.org/10.1038/s41534-020-00305-x} {10.1038/s41534-020-00305-x} (\bibinfo {year} {2020})\BibitemShut {NoStop}%
\bibitem [{\citenamefont {Weilenmann}\ \emph {et~al.}(2024)\citenamefont {Weilenmann}, \citenamefont {Budroni},\ and\ \citenamefont {Navascues}}]{weilenmann}%
  \BibitemOpen
  \bibfield  {author} {\bibinfo {author} {\bibfnamefont {M.}~\bibnamefont {Weilenmann}}, \bibinfo {author} {\bibfnamefont {C.}~\bibnamefont {Budroni}},\ and\ \bibinfo {author} {\bibfnamefont {M.}~\bibnamefont {Navascues}},\ }\href {https://arxiv.org/abs/2402.12318} {\bibinfo {title} {Memory attacks in network nonlocality and self-testing}} (\bibinfo {year} {2024}),\ \Eprint {https://arxiv.org/abs/2402.12318} {arXiv:2402.12318 [quant-ph]} \BibitemShut {NoStop}%
\bibitem [{\citenamefont {Renou}\ \emph {et~al.}(2018)\citenamefont {Renou}, \citenamefont {Kaniewski},\ and\ \citenamefont {Brunner}}]{Marco}%
  \BibitemOpen
  \bibfield  {author} {\bibinfo {author} {\bibfnamefont {M.-O.}\ \bibnamefont {Renou}}, \bibinfo {author} {\bibfnamefont {J.}~\bibnamefont {Kaniewski}},\ and\ \bibinfo {author} {\bibfnamefont {N.}~\bibnamefont {Brunner}},\ }\bibfield  {title} {\bibinfo {title} {Self-testing entangled measurements in quantum networks},\ }\href {https://doi.org/10.1103/PhysRevLett.121.250507} {\bibfield  {journal} {\bibinfo  {journal} {Phys. Rev. Lett.}\ }\textbf {\bibinfo {volume} {121}},\ \bibinfo {pages} {250507} (\bibinfo {year} {2018})}\BibitemShut {NoStop}%
\bibitem [{\citenamefont {McKague}\ \emph {et~al.}(2012)\citenamefont {McKague}, \citenamefont {Yang},\ and\ \citenamefont {Scarani}}]{Scarani}%
  \BibitemOpen
  \bibfield  {author} {\bibinfo {author} {\bibfnamefont {M.}~\bibnamefont {McKague}}, \bibinfo {author} {\bibfnamefont {T.~H.}\ \bibnamefont {Yang}},\ and\ \bibinfo {author} {\bibfnamefont {V.}~\bibnamefont {Scarani}},\ }\bibfield  {title} {\bibinfo {title} {Robust self-testing of the singlet},\ }\href {https://doi.org/10.1088/1751-8113/45/45/455304} {\bibfield  {journal} {\bibinfo  {journal} {J. Phys. A: Math. Theor.}\ }\textbf {\bibinfo {volume} {45}},\ \bibinfo {pages} {455304} (\bibinfo {year} {2012})}\BibitemShut {NoStop}%
\bibitem [{\citenamefont {Yang}\ and\ \citenamefont {Navascu\'es}(2013)}]{Yang}%
  \BibitemOpen
  \bibfield  {author} {\bibinfo {author} {\bibfnamefont {T.~H.}\ \bibnamefont {Yang}}\ and\ \bibinfo {author} {\bibfnamefont {M.}~\bibnamefont {Navascu\'es}},\ }\bibfield  {title} {\bibinfo {title} {Robust self-testing of unknown quantum systems into any entangled two-qubit states},\ }\href {https://doi.org/10.1103/PhysRevA.87.050102} {\bibfield  {journal} {\bibinfo  {journal} {Phys. Rev. A}\ }\textbf {\bibinfo {volume} {87}},\ \bibinfo {pages} {050102} (\bibinfo {year} {2013})}\BibitemShut {NoStop}%
\bibitem [{\citenamefont {Bamps}\ and\ \citenamefont {Pironio}(2015)}]{Bamps}%
  \BibitemOpen
  \bibfield  {author} {\bibinfo {author} {\bibfnamefont {C.}~\bibnamefont {Bamps}}\ and\ \bibinfo {author} {\bibfnamefont {S.}~\bibnamefont {Pironio}},\ }\bibfield  {title} {\bibinfo {title} {Sum-of-squares decompositions for a family of clauser-horne-shimony-holt-like inequalities and their application to self-testing},\ }\href {https://doi.org/10.1103/PhysRevA.91.052111} {\bibfield  {journal} {\bibinfo  {journal} {Phys. Rev. A}\ }\textbf {\bibinfo {volume} {91}},\ \bibinfo {pages} {052111} (\bibinfo {year} {2015})}\BibitemShut {NoStop}%
\bibitem [{\citenamefont {Wang}\ \emph {et~al.}(2016)\citenamefont {Wang}, \citenamefont {Wu},\ and\ \citenamefont {Scarani}}]{All}%
  \BibitemOpen
  \bibfield  {author} {\bibinfo {author} {\bibfnamefont {Y.}~\bibnamefont {Wang}}, \bibinfo {author} {\bibfnamefont {X.}~\bibnamefont {Wu}},\ and\ \bibinfo {author} {\bibfnamefont {V.}~\bibnamefont {Scarani}},\ }\bibfield  {title} {\bibinfo {title} {All the self-testings of the singlet for two binary measurements},\ }\href {https://doi.org/10.1088/1367-2630/18/2/025021} {\bibfield  {journal} {\bibinfo  {journal} {New J. Phys.}\ }\textbf {\bibinfo {volume} {18}},\ \bibinfo {pages} {025021} (\bibinfo {year} {2016})}\BibitemShut {NoStop}%
\bibitem [{\citenamefont {{\v{S}}upi{\'{c}}}\ \emph {et~al.}(2016)\citenamefont {{\v{S}}upi{\'{c}}}, \citenamefont {Augusiak}, \citenamefont {Salavrakos},\ and\ \citenamefont {Ac{\'{\i}}n}}]{chainedBell}%
  \BibitemOpen
  \bibfield  {author} {\bibinfo {author} {\bibfnamefont {I.}~\bibnamefont {{\v{S}}upi{\'{c}}}}, \bibinfo {author} {\bibfnamefont {R.}~\bibnamefont {Augusiak}}, \bibinfo {author} {\bibfnamefont {A.}~\bibnamefont {Salavrakos}},\ and\ \bibinfo {author} {\bibfnamefont {A.}~\bibnamefont {Ac{\'{\i}}n}},\ }\bibfield  {title} {\bibinfo {title} {Self-testing protocols based on the chained bell inequalities},\ }\href {https://doi.org/10.1088/1367-2630/18/3/035013} {\bibfield  {journal} {\bibinfo  {journal} {New J. Phys.}\ }\textbf {\bibinfo {volume} {18}},\ \bibinfo {pages} {035013} (\bibinfo {year} {2016})}\BibitemShut {NoStop}%
\bibitem [{\citenamefont {Sarkar}\ \emph {et~al.}(2021)\citenamefont {Sarkar}, \citenamefont {Saha}, \citenamefont {Kaniewski},\ and\ \citenamefont {Augusiak}}]{sarkar}%
  \BibitemOpen
  \bibfield  {author} {\bibinfo {author} {\bibfnamefont {S.}~\bibnamefont {Sarkar}}, \bibinfo {author} {\bibfnamefont {D.}~\bibnamefont {Saha}}, \bibinfo {author} {\bibfnamefont {J.}~\bibnamefont {Kaniewski}},\ and\ \bibinfo {author} {\bibfnamefont {R.}~\bibnamefont {Augusiak}},\ }\href {https://doi.org/10.1038/s41534-021-00490-3} {\bibfield  {journal} {\bibinfo  {journal} {npj Quantum Information}\ }\textbf {\bibinfo {volume} {7}},\ \bibinfo {pages} {151} (\bibinfo {year} {2021})}\BibitemShut {NoStop}%
\bibitem [{\citenamefont {Coladangelo}\ \emph {et~al.}(2017)\citenamefont {Coladangelo}, \citenamefont {Goh},\ and\ \citenamefont {Scarani}}]{Coladangelo_2017}%
  \BibitemOpen
  \bibfield  {author} {\bibinfo {author} {\bibfnamefont {A.}~\bibnamefont {Coladangelo}}, \bibinfo {author} {\bibfnamefont {K.~T.}\ \bibnamefont {Goh}},\ and\ \bibinfo {author} {\bibfnamefont {V.}~\bibnamefont {Scarani}},\ }\bibfield  {title} {\bibinfo {title} {All pure bipartite entangled states can be self-tested},\ }\bibfield  {journal} {\bibinfo  {journal} {Nature Communications}\ }\textbf {\bibinfo {volume} {8}},\ \href {https://doi.org/10.1038/ncomms15485} {10.1038/ncomms15485} (\bibinfo {year} {2017})\BibitemShut {NoStop}%
\bibitem [{\citenamefont {Sarkar}\ and\ \citenamefont {Augusiak}(2022)}]{sarkar4}%
  \BibitemOpen
  \bibfield  {author} {\bibinfo {author} {\bibfnamefont {S.}~\bibnamefont {Sarkar}}\ and\ \bibinfo {author} {\bibfnamefont {R.}~\bibnamefont {Augusiak}},\ }\bibfield  {title} {\bibinfo {title} {Self-testing of multipartite greenberger-horne-zeilinger states of arbitrary local dimension with arbitrary number of measurements per party},\ }\href {https://doi.org/10.1103/PhysRevA.105.032416} {\bibfield  {journal} {\bibinfo  {journal} {Phys. Rev. A}\ }\textbf {\bibinfo {volume} {105}},\ \bibinfo {pages} {032416} (\bibinfo {year} {2022})}\BibitemShut {NoStop}%
\bibitem [{\citenamefont {Šupić}\ and\ \citenamefont {Brunner}(2022)}]{NLWEsupic}%
  \BibitemOpen
  \bibfield  {author} {\bibinfo {author} {\bibfnamefont {I.}~\bibnamefont {Šupić}}\ and\ \bibinfo {author} {\bibfnamefont {N.}~\bibnamefont {Brunner}},\ }\bibfield  {title} {\bibinfo {title} {Self-testing nonlocality without entanglement},\ }\href {https://arxiv.org/abs/2203.13171} {\bibfield  {journal} {\bibinfo  {journal} {arXiv:2203.13171}\ } (\bibinfo {year} {2022})}\BibitemShut {NoStop}%
\bibitem [{\citenamefont {Zhou}\ \emph {et~al.}(2022)\citenamefont {Zhou}, \citenamefont {Xu}, \citenamefont {Zhao}, \citenamefont {Zhen}, \citenamefont {Li}, \citenamefont {Liu},\ and\ \citenamefont {Chen}}]{JW2}%
  \BibitemOpen
  \bibfield  {author} {\bibinfo {author} {\bibfnamefont {Q.}~\bibnamefont {Zhou}}, \bibinfo {author} {\bibfnamefont {X.-Y.}\ \bibnamefont {Xu}}, \bibinfo {author} {\bibfnamefont {S.}~\bibnamefont {Zhao}}, \bibinfo {author} {\bibfnamefont {Y.-Z.}\ \bibnamefont {Zhen}}, \bibinfo {author} {\bibfnamefont {L.}~\bibnamefont {Li}}, \bibinfo {author} {\bibfnamefont {N.-L.}\ \bibnamefont {Liu}},\ and\ \bibinfo {author} {\bibfnamefont {K.}~\bibnamefont {Chen}},\ }\bibfield  {title} {\bibinfo {title} {Robust self-testing of multipartite {G}reenberger-{H}orne-{Z}eilinger-state measurements in quantum networks},\ }\href {https://doi.org/10.1103/PhysRevA.106.042608} {\bibfield  {journal} {\bibinfo  {journal} {Phys. Rev. A}\ }\textbf {\bibinfo {volume} {106}},\ \bibinfo {pages} {042608} (\bibinfo {year} {2022})}\BibitemShut {NoStop}%
\bibitem [{\citenamefont {Šupić}\ \emph {et~al.}(2022)\citenamefont {Šupić}, \citenamefont {Bowles}, \citenamefont {Renou}, \citenamefont {Acín},\ and\ \citenamefont {Hoban}}]{Allst1}%
  \BibitemOpen
  \bibfield  {author} {\bibinfo {author} {\bibfnamefont {I.}~\bibnamefont {Šupić}}, \bibinfo {author} {\bibfnamefont {J.}~\bibnamefont {Bowles}}, \bibinfo {author} {\bibfnamefont {M.-O.}\ \bibnamefont {Renou}}, \bibinfo {author} {\bibfnamefont {A.}~\bibnamefont {Acín}},\ and\ \bibinfo {author} {\bibfnamefont {M.~J.}\ \bibnamefont {Hoban}},\ }\href {https://doi.org/10.48550/ARXIV.2201.05032} {\bibinfo {title} {Quantum networks self-test all entangled states}} (\bibinfo {year} {2022}),\ \Eprint {https://arxiv.org/abs/2201.05032} {arXiv:2201.05032} \BibitemShut {NoStop}%
\bibitem [{\citenamefont {Sarkar}\ \emph {et~al.}(2023{\natexlab{a}})\citenamefont {Sarkar}, \citenamefont {Datta}, \citenamefont {Halder},\ and\ \citenamefont {Augusiak}}]{sarkar2023_UPB}%
  \BibitemOpen
  \bibfield  {author} {\bibinfo {author} {\bibfnamefont {S.}~\bibnamefont {Sarkar}}, \bibinfo {author} {\bibfnamefont {C.}~\bibnamefont {Datta}}, \bibinfo {author} {\bibfnamefont {S.}~\bibnamefont {Halder}},\ and\ \bibinfo {author} {\bibfnamefont {R.}~\bibnamefont {Augusiak}},\ }\href {https://arxiv.org/abs/2301.11409} {\bibinfo {title} {Self-testing composite measurements and bound entangled state in a unified framework}} (\bibinfo {year} {2023}{\natexlab{a}}),\ \Eprint {https://arxiv.org/abs/2301.11409} {arXiv:2301.11409 [quant-ph]} \BibitemShut {NoStop}%
\bibitem [{\citenamefont {\ifmmode \check{S}\else \v{S}\fi{}upi\ifmmode~\acute{c}\else \'{c}\fi{}}\ \emph {et~al.}(2022)\citenamefont {\ifmmode \check{S}\else \v{S}\fi{}upi\ifmmode~\acute{c}\else \'{c}\fi{}}, \citenamefont {Bancal}, \citenamefont {Cai},\ and\ \citenamefont {Brunner}}]{supic4}%
  \BibitemOpen
  \bibfield  {author} {\bibinfo {author} {\bibfnamefont {I.}~\bibnamefont {\ifmmode \check{S}\else \v{S}\fi{}upi\ifmmode~\acute{c}\else \'{c}\fi{}}}, \bibinfo {author} {\bibfnamefont {J.-D.}\ \bibnamefont {Bancal}}, \bibinfo {author} {\bibfnamefont {Y.}~\bibnamefont {Cai}},\ and\ \bibinfo {author} {\bibfnamefont {N.}~\bibnamefont {Brunner}},\ }\bibfield  {title} {\bibinfo {title} {Genuine network quantum nonlocality and self-testing},\ }\href {https://doi.org/10.1103/PhysRevA.105.022206} {\bibfield  {journal} {\bibinfo  {journal} {Phys. Rev. A}\ }\textbf {\bibinfo {volume} {105}},\ \bibinfo {pages} {022206} (\bibinfo {year} {2022})}\BibitemShut {NoStop}%
\bibitem [{\citenamefont {Sarkar}(2024{\natexlab{c}})}]{Sarkar_2024_opera}%
  \BibitemOpen
  \bibfield  {author} {\bibinfo {author} {\bibfnamefont {S.}~\bibnamefont {Sarkar}},\ }\bibfield  {title} {\bibinfo {title} {Causal links between operationally independent events in quantum theory},\ }\bibfield  {journal} {\bibinfo  {journal} {Physical Review A}\ }\textbf {\bibinfo {volume} {109}},\ \href {https://doi.org/10.1103/physreva.109.l040202} {10.1103/physreva.109.l040202} (\bibinfo {year} {2024}{\natexlab{c}})\BibitemShut {NoStop}%
\bibitem [{\citenamefont {Sarkar}\ \emph {et~al.}(2024{\natexlab{a}})\citenamefont {Sarkar}, \citenamefont {Alexandre C.~Orthey},\ and\ \citenamefont {Augusiak}}]{sarkar2024universal}%
  \BibitemOpen
  \bibfield  {author} {\bibinfo {author} {\bibfnamefont {S.}~\bibnamefont {Sarkar}}, \bibinfo {author} {\bibfnamefont {J.~a.}\ \bibnamefont {Alexandre C.~Orthey}},\ and\ \bibinfo {author} {\bibfnamefont {R.}~\bibnamefont {Augusiak}},\ }\href {https://arxiv.org/abs/2312.04405} {\bibinfo {title} {A universal scheme to self-test any quantum state and extremal measurement}} (\bibinfo {year} {2024}{\natexlab{a}}),\ \Eprint {https://arxiv.org/abs/2312.04405} {arXiv:2312.04405 [quant-ph]} \BibitemShut {NoStop}%
\bibitem [{\citenamefont {{\v{S}}upi\'c}\ and\ \citenamefont {Hoban}(2016)}]{Supic}%
  \BibitemOpen
  \bibfield  {author} {\bibinfo {author} {\bibfnamefont {I.}~\bibnamefont {{\v{S}}upi\'c}}\ and\ \bibinfo {author} {\bibfnamefont {M.~J.}\ \bibnamefont {Hoban}},\ }\bibfield  {title} {\bibinfo {title} {Self-testing through {EPR}-steering},\ }\href {https://doi.org/10.1088/1367-2630/18/7/075006} {\bibfield  {journal} {\bibinfo  {journal} {New J. Phys.}\ }\textbf {\bibinfo {volume} {18}},\ \bibinfo {pages} {075006} (\bibinfo {year} {2016})}\BibitemShut {NoStop}%
\bibitem [{\citenamefont {Gheorghiu}\ \emph {et~al.}(2017)\citenamefont {Gheorghiu}, \citenamefont {Wallden},\ and\ \citenamefont {Kashefi}}]{Alex}%
  \BibitemOpen
  \bibfield  {author} {\bibinfo {author} {\bibfnamefont {A.}~\bibnamefont {Gheorghiu}}, \bibinfo {author} {\bibfnamefont {P.}~\bibnamefont {Wallden}},\ and\ \bibinfo {author} {\bibfnamefont {E.}~\bibnamefont {Kashefi}},\ }\bibfield  {title} {\bibinfo {title} {Rigidity of quantum steering and one-sided device-independent verifiable quantum computation},\ }\href {https://doi.org/10.1088/1367-2630/aa5cff} {\bibfield  {journal} {\bibinfo  {journal} {New J. Phys.}\ }\textbf {\bibinfo {volume} {19}},\ \bibinfo {pages} {023043} (\bibinfo {year} {2017})}\BibitemShut {NoStop}%
\bibitem [{\citenamefont {Sarkar}\ \emph {et~al.}(2022)\citenamefont {Sarkar}, \citenamefont {Saha},\ and\ \citenamefont {Augusiak}}]{sarkar6}%
  \BibitemOpen
  \bibfield  {author} {\bibinfo {author} {\bibfnamefont {S.}~\bibnamefont {Sarkar}}, \bibinfo {author} {\bibfnamefont {D.}~\bibnamefont {Saha}},\ and\ \bibinfo {author} {\bibfnamefont {R.}~\bibnamefont {Augusiak}},\ }\bibfield  {title} {\bibinfo {title} {Certification of incompatible measurements using quantum steering},\ }\href {https://doi.org/10.1103/PhysRevA.106.L040402} {\bibfield  {journal} {\bibinfo  {journal} {Phys. Rev. A}\ }\textbf {\bibinfo {volume} {106}},\ \bibinfo {pages} {L040402} (\bibinfo {year} {2022})}\BibitemShut {NoStop}%
\bibitem [{\citenamefont {Sarkar}\ \emph {et~al.}(2023{\natexlab{b}})\citenamefont {Sarkar}, \citenamefont {Borka\l{}a}, \citenamefont {Jebarathinam}, \citenamefont {Makuta}, \citenamefont {Saha},\ and\ \citenamefont {Augusiak}}]{sarkar12}%
  \BibitemOpen
  \bibfield  {author} {\bibinfo {author} {\bibfnamefont {S.}~\bibnamefont {Sarkar}}, \bibinfo {author} {\bibfnamefont {J.~J.}\ \bibnamefont {Borka\l{}a}}, \bibinfo {author} {\bibfnamefont {C.}~\bibnamefont {Jebarathinam}}, \bibinfo {author} {\bibfnamefont {O.}~\bibnamefont {Makuta}}, \bibinfo {author} {\bibfnamefont {D.}~\bibnamefont {Saha}},\ and\ \bibinfo {author} {\bibfnamefont {R.}~\bibnamefont {Augusiak}},\ }\bibfield  {title} {\bibinfo {title} {Self-testing of any pure entangled state with the minimal number of measurements and optimal randomness certification in a one-sided device-independent scenario},\ }\href {https://doi.org/10.1103/PhysRevApplied.19.034038} {\bibfield  {journal} {\bibinfo  {journal} {Phys. Rev. Appl.}\ }\textbf {\bibinfo {volume} {19}},\ \bibinfo {pages} {034038} (\bibinfo {year} {2023}{\natexlab{b}})}\BibitemShut {NoStop}%
\bibitem [{\citenamefont {Shrotriya}\ \emph {et~al.}(2021)\citenamefont {Shrotriya}, \citenamefont {Bharti},\ and\ \citenamefont {Kwek}}]{Bharti}%
  \BibitemOpen
  \bibfield  {author} {\bibinfo {author} {\bibfnamefont {H.}~\bibnamefont {Shrotriya}}, \bibinfo {author} {\bibfnamefont {K.}~\bibnamefont {Bharti}},\ and\ \bibinfo {author} {\bibfnamefont {L.-C.}\ \bibnamefont {Kwek}},\ }\bibfield  {title} {\bibinfo {title} {Robust semi-device-independent certification of all pure bipartite maximally entangled states via quantum steering},\ }\href {https://doi.org/10.1103/PhysRevResearch.3.033093} {\bibfield  {journal} {\bibinfo  {journal} {Phys. Rev. Res.}\ }\textbf {\bibinfo {volume} {3}},\ \bibinfo {pages} {033093} (\bibinfo {year} {2021})}\BibitemShut {NoStop}%
\bibitem [{\citenamefont {Sarkar}(2023{\natexlab{a}})}]{sarkar11}%
  \BibitemOpen
  \bibfield  {author} {\bibinfo {author} {\bibfnamefont {S.}~\bibnamefont {Sarkar}},\ }\bibfield  {title} {\bibinfo {title} {Certification of the maximally entangled state using nonprojective measurements},\ }\href {https://doi.org/10.1103/PhysRevA.107.032408} {\bibfield  {journal} {\bibinfo  {journal} {Phys. Rev. A}\ }\textbf {\bibinfo {volume} {107}},\ \bibinfo {pages} {032408} (\bibinfo {year} {2023}{\natexlab{a}})}\BibitemShut {NoStop}%
\bibitem [{\citenamefont {Sarkar}\ \emph {et~al.}(2024{\natexlab{b}})\citenamefont {Sarkar}, \citenamefont {Alexandre C.~Orthey}, \citenamefont {Sharma},\ and\ \citenamefont {Augusiak}}]{sarkar2024_GME}%
  \BibitemOpen
  \bibfield  {author} {\bibinfo {author} {\bibfnamefont {S.}~\bibnamefont {Sarkar}}, \bibinfo {author} {\bibfnamefont {J.~a.}\ \bibnamefont {Alexandre C.~Orthey}}, \bibinfo {author} {\bibfnamefont {G.}~\bibnamefont {Sharma}},\ and\ \bibinfo {author} {\bibfnamefont {R.}~\bibnamefont {Augusiak}},\ }\href {https://arxiv.org/abs/2402.18522} {\bibinfo {title} {Almost device-independent certification of gme states with minimal measurements}} (\bibinfo {year} {2024}{\natexlab{b}}),\ \Eprint {https://arxiv.org/abs/2402.18522} {arXiv:2402.18522 [quant-ph]} \BibitemShut {NoStop}%
\bibitem [{\citenamefont {Baptista}\ \emph {et~al.}(2023)\citenamefont {Baptista}, \citenamefont {Chen}, \citenamefont {Kaniewski}, \citenamefont {Lolck}, \citenamefont {Mančinska}, \citenamefont {Nielsen},\ and\ \citenamefont {Schmidt}}]{baptista2023}%
  \BibitemOpen
  \bibfield  {author} {\bibinfo {author} {\bibfnamefont {P.}~\bibnamefont {Baptista}}, \bibinfo {author} {\bibfnamefont {R.}~\bibnamefont {Chen}}, \bibinfo {author} {\bibfnamefont {J.}~\bibnamefont {Kaniewski}}, \bibinfo {author} {\bibfnamefont {D.~R.}\ \bibnamefont {Lolck}}, \bibinfo {author} {\bibfnamefont {L.}~\bibnamefont {Mančinska}}, \bibinfo {author} {\bibfnamefont {T.~G.}\ \bibnamefont {Nielsen}},\ and\ \bibinfo {author} {\bibfnamefont {S.}~\bibnamefont {Schmidt}},\ }\href {https://arxiv.org/abs/2310.12662} {\bibinfo {title} {A mathematical foundation for self-testing: Lifting common assumptions}} (\bibinfo {year} {2023}),\ \Eprint {https://arxiv.org/abs/2310.12662} {arXiv:2310.12662 [quant-ph]} \BibitemShut {NoStop}%
\bibitem [{\citenamefont {Branciard}\ \emph {et~al.}(2013)\citenamefont {Branciard}, \citenamefont {Rosset}, \citenamefont {Liang},\ and\ \citenamefont {Gisin}}]{Branciard_2013}%
  \BibitemOpen
  \bibfield  {author} {\bibinfo {author} {\bibfnamefont {C.}~\bibnamefont {Branciard}}, \bibinfo {author} {\bibfnamefont {D.}~\bibnamefont {Rosset}}, \bibinfo {author} {\bibfnamefont {Y.-C.}\ \bibnamefont {Liang}},\ and\ \bibinfo {author} {\bibfnamefont {N.}~\bibnamefont {Gisin}},\ }\bibfield  {title} {\bibinfo {title} {Measurement-device-independent entanglement witnesses for all entangled quantum states},\ }\bibfield  {journal} {\bibinfo  {journal} {Physical Review Letters}\ }\textbf {\bibinfo {volume} {110}},\ \href {https://doi.org/10.1103/physrevlett.110.060405} {10.1103/physrevlett.110.060405} (\bibinfo {year} {2013})\BibitemShut {NoStop}%
\bibitem [{\citenamefont {Sarkar}(2023{\natexlab{b}})}]{Sarkar_distrust}%
  \BibitemOpen
  \bibfield  {author} {\bibinfo {author} {\bibfnamefont {S.}~\bibnamefont {Sarkar}},\ }\bibfield  {title} {\bibinfo {title} {Distrustful quantum steering},\ }\bibfield  {journal} {\bibinfo  {journal} {Physical Review A}\ }\textbf {\bibinfo {volume} {108}},\ \href {https://doi.org/10.1103/physreva.108.l040401} {10.1103/physreva.108.l040401} (\bibinfo {year} {2023}{\natexlab{b}})\BibitemShut {NoStop}%
\end{thebibliography}

\end{document}